\newtheorem{theorem}{Theorem}[section]
\newtheorem{proposition}[theorem]{Proposition}
\newtheorem{corollary}[theorem]{Corollary}
\newtheorem{example}[theorem]{Example}
\theoremstyle{definition}
\newtheorem{definition}[theorem]{Definition}
\newtheorem{algorithm}[theorem]{Algorithm}
\def\A{\mathcal{A}}
\def\K{\mathbb{K}}
\def\F{\mathbb{F}}
\def\pj{\mathbb{P}}
\def\E{\mathbb{E}}
\def\C{\mathbb{C}}
\def\L{\mathbb{L}}
\def\N{\mathbb{N}}
\def\cP{\mathcal{P}}
\def\ss{\mathcal{S}}
\def\cL{\mathcal{L}}
\def\cH{\mathcal{H}}
\def\xx{\overline{x}}
\def\yy{\overline{y}}
\def\zz{\overline{z}}
\def\bb{\mathcal W}
\def\ccs{\mathcal{C}_{s}}
\def\ot{\,\overline{t}\,}
\def\res{\mathrm{Res}}
\def\pp{\mathrm{pp}}
\def\gg{\mathfrak{g}}
\begin{document}

\title{\vspace{-1em}First Steps Towards Radical Parametrization of Algebraic Surfaces}

\author{
J. Rafael Sendra\footnote{Supported by the Spanish \textsf{Ministerio de Ciencia e Innovaci\'on} under the Project MTM2008-04699-C03-01, and by the \textsf{Ministerio de Econom\'ia y Competitividad} under the project MTM2011-25816-C02-01; member of the of the Research Group \textsc{asynacs} (Ref. \textsc{ccee2011/r34}).} \\
Dep. de Matem\'aticas \\
Universidad de Alcal\'a \\
Alcal\'a de Henares, Madrid, Spain \\
\texttt{rafael.sendra@uah.es}
\and
David Sevilla\thanks{Supported by the Austrian Science Fund (FWF): P22766-N18.} \\
Johann Radon Institute (RICAM) \\
Altenbergerstrasse 69 \\
A-4040 Linz, Austria \\
\texttt{david.sevilla@oeaw.ac.at, \url{http://www.davidsevilla.com}} \\
}
\date{}
\maketitle

\begin{abstract}
We introduce the notion of radical parametrization of a surface, and we provide algorithms to compute such type of parametrizations for families of surfaces, like: Fermat surfaces, surfaces with a high multiplicity (at least the degree minus 4) singularity, all irreducible surfaces of degree at most 5, all irreducible singular surfaces of degree 6, and surfaces containing a pencil of low-genus curves. In addition, we prove that radical parametrizations are preserved under certain type of geometric constructions that include offset and conchoids.
\end{abstract}

\textbf{Keywords}: algebraic surface, radical parametrization



\section{Introduction}
Let us try to motivate our work from two different points of view: from the purely mathematic point of view and from the potential applications.

Let $K$ be a field (say, of characteristic zero) and $f\in K[x]$ a univariate polynomial. A classical question is whether the roots of $f(x)$ can be computed exactly. We know by Galois theory that, in general, we can only provide a positive answer (based on radical expressions) when the degree is at most 4. Furthermore, methods to compute them are available. Now, take $f(x,y)\in K[x,y]$ and formulate the same question. The first remark is that the dimension of the zero-set (i.e. the set of roots) has increased from 0 to 1. The answer was given by Zariski, who proved in \cite{Zariski1926a} that this is only possible, in general, when the genus of the curve defined by $f(x,y)$ (say w.l.o.g. that $f$ is irreducible over the algebraic closure $\overline{K}$ of $K$) is at most 6. Furthermore, in \cite{SeSevilla}, it is shown how to calculate them if the genus is at most 4; the cases of genus 5 and 6 are treated algorithmically in \cite{Ha}. An expectable remark is that, when the genus is 0, the answer is expressed in $K'(t)$, where $K'$ is a finite algebraic extension of $K$ and $t$ is transcendental over $K'$, and when $1\leq$ genus $\leq 6$ the answer is given by radical expressions over $K'(t)$.

The next step is to formulate the same question when $f(x,y,z)\in K[x,y,z]$. This is the central topic of this article. To our knowledge, there exist no theoretical results establishing the limitations to solve the problem, either by the degree (in 1 variable) or by genus (in 2 variables). In this paper we give the first steps in this direction; later in this introduction we give more details. 

Now, let us see the problem from another side. It is well known that in many applications dealing with geometric objects, parametric representations are very useful. Examples of this affirmation are, for instance:
\begin{itemize}
 \item when dealing with the intersection of two varieties (say surfaces or curve/surface) it is convenient to have a parametric representation of one of them;
 \item when plotting in the screen a curve or a surface;
 \item when computing line or surface integrals;
 \item when dealing with the velocity or the acceleration of a particle following a path on the variety;
 \item when performing geometric transformations, such as rotations, translations, and scaling;
 \item when executing projections, etc.
\end{itemize}
Therefore, it is important to provide parametric representations. There are different options: one can use rational functions, trigonometric parametrizations, piecewise approximate parametrizations, etc. In this paper, we propose to enlarge the class of rational parametrizations by introducing radical parametrizations, i.e. algebraic expressions involving (maybe nested) radicals of polynomials.

To be more precise, we introduce the notion of radical parametrization by using basic notions of Galois theory as well as the ideas in \cite{SeSevilla}, and we provide algorithms to parametrize by radicals some families of surfaces. In Section \ref{sec-radicals}, besides introducing the basic notations, we show how to parametrize by radicals some special surfaces, including Fermat surfaces. In the next two sections we try to follow the first steps in the theoretical analysis of the rationality of surfaces (see \cite{Sch-a}). More precisely, in Section \ref{sec-parametrization-by-lines} we generalize the notion of parametrization by lines to the case of radicals. As a consequence we provide an algorithm that parametrizes every surface having a singularity of multiplicity $d-r$, where $d$ is the degree of the surface and $r\leq 4$. From these results we prove that every irreducible surface of degree at most 5 is parametrizable by radicals, and that every singular surface of degree 6 is also parametrizable by radicals.

In Section \ref{sec-pencil} we provide algorithms to parametrize by radicals surfaces with a pencil of genus $g$ curves, where $g\leq 4$, with some additional hypotheses when the genus is 1 or 4. Furthermore, we offer an alternative approach for surfaces with a pencil of non-hyperelliptic genus 4 curves, based on a known theoretical characterization of such curves; the detection of the hyperelliptic case and its reduction to other cases in this article is well known, see that section for further details. Finally, in Section \ref{sec-offsets} we prove that radical parametrizations are preserved under certain types of geometric constructions that include offseting and conchoids.

Throughout this paper we will use the following notation: $\F$ is an algebraically closed field of characteristic zero (say e.g. $\F=\C$), and $\ss\subset\F^3$ is an irreducible surface defined by the irreducible polynomial $F(x,y,z)\in \F[x,y,z]$.

The computations and images shown in what follows have been performed with Maple and Surfex respectively.

\section{The Notion of Radical Parametrization of a Surface}\label{sec-radicals}

We introduce the notion of radical parametrization of a surface by extending the notion of radical parametrization of a curve (see \cite{SeSevilla}). For this purpose, in the sequel,
\begin{itemize}
\item let $\ot=(t_1,t_2)$, where $t_1,t_2$ are transcendental elements over $\F$, and
\item let $\K=\F(\ot)$.
\end{itemize}
But first, we recall briefly the classical notion of
solvability by radicals. $f\in K[x]$ ($K$ is any field of characteristic zero) is \textsf{solvable by radicals over $K$} if there exists a finite tower of field extensions
\[ K=K_0\subset K_1\subset \cdots \subset K_r\]
such that
\begin{enumerate}
\item for $i=1,\ldots,r$, $K_i=K_{i-1}(\alpha_i)$ where $\alpha_i^{\ell_i}-c_i=0$ for some $\ell_i>0$ and $c_i\in K_{i-1}$;
\item the splitting field of $f$ over $K$ is contained in $K_r$.
\end{enumerate}
A tower as above is called a \textsf{root tower for $f$ over $K$}.

Intuitively speaking a radical parametrization of a surface is triple of rational algebraic expressions involving (possibly nested) radicals of polynomials such that its formal substitution, in the defining polynomial of the surface, yields zero; and such that its Jacobian has rank 2. A formal definition follows.

\begin{definition}\label{def-rad-param}
The surface $\ss$ is \textsf{radical or pa\-ra\-me\-tri\-za\-ble by radicals} if there exist
\begin{itemize}
 \item A field $\E$ which is the largest field in a root tower of some $f(x)\in \K[x]$ solvable over $\K$,
  \[
   \F(t_1,t_2)=\K=\K_0\subset \K_1\subset \cdots \subset \K_r=\E\,,
  \]
 \item $(R_1,R_2,R_3)\in \E^{3}$ satisfying
 \begin{itemize}
  \item[$\circ$] $F(R_1,R_2,R_3)=0$,
  \item[$\circ$] the rank of the jacobian of $(R_1,R_2,R_3)$ w.r.t. $\ot$ is 2.
 \end{itemize}
\end{itemize}
In that case we call $(R_1,R_2,R_3)$ a \textsf{radical (affine) parametrization of $\ss$}. Similarly we introduce the notion of \textsf{radical projective parametrization}. We will denote the radical parametrization as $\cP(\ot)$. Furthermore, if $\L$ is a subfield of $\F$, and the root tower can be constructed over $\L(\ot)$, we say that $\ss$ is \textsf{parametrizable by radicals over $\L$}.
\end{definition}

\begin{example}
Let us start with a simple example to illustrate the notion of rational parametrization. We see that
\[
 \cP(\ot)=\left(t_1,t_2,\sqrt{1-t_{1}^{2}-t_{2}^{2}}\right)
\]
is a radical parametrization of the unit sphere. Indeed, we take $f(x)=x^2-(t_1^2+t_2^2)\in \K[x]$ (note the ambiguity of the $\sqrt{\ }$ symbol, since it could denote any of the two roots of $f$). Then $f$ solvable using the tower
\[
 \F(\ot)=\K_0\subset \K_1=\K_0\left(\sqrt{1-t_{1}^{2}-t_{2}^{2}}\right)=\E.
\]
Now, $\cP(\ot)\in \E^3$ and its jacobian has rank 2.
\end{example}

If $R_1,R_2,R_3\in\K$ then they trivially belong to the largest field of a root tower. Therefore, every rational surface parametrization is a radical parametrization, in the sense of our definition. In other words, every rational surface is radical. Also, note that if
\[
 \Phi:\ss \subset \F^3\rightarrow \Phi(\ss)\subset \F^3
\]
is a rational map of finite degree, and $\cP(\ot)$ is a radical parametrization of $\ss$, then $\Phi(\cP(\ot))$ is a radical parametrization of the Zariski closure of $\Phi(\ss)$; furthermore, because of the condition of the finite degree, the image variety is also a surface. As consequence, we get the following proposition.

\begin{proposition}
The property of being parametrizable by radicals is invariant under birational transformations.
\end{proposition}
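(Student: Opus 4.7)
The statement is an immediate corollary of the observation recorded in the paragraph immediately preceding it; a birational map between surfaces is a rational map of degree $1$ in each direction, so both sides of the equivalence follow by a single application of that observation. Thus my plan is to organize the proof as a clean implication in one direction (the other being symmetric), taking care to verify the three conditions in Definition \ref{def-rad-param}.

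First I would set up the data: suppose $\ss \subset \F^3$ is parametrizable by radicals, with a radical parametrization $\cP(\ot) = (R_1,R_2,R_3) \in \E^3$, where $\E$ sits at the top of a root tower $\K = \K_0 \subset \K_1 \subset \cdots \subset \K_r = \E$ for some $f\in\K[x]$ solvable by radicals over $\K$. Let $\Phi : \ss \dashrightarrow \ss'$ be a birational map to another irreducible surface $\ss' \subset \F^3$, defined by $\Phi = (\phi_1,\phi_2,\phi_3)$ with $\phi_i \in \F(x,y,z)$. I would then form $\cP'(\ot) := \Phi(\cP(\ot))$ and show it is a radical parametrization of $\ss'$.

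The three verifications go as follows. For the ambient field, observe that the $\phi_i$ are rational functions over $\F \subset \K$, so each component of $\cP'(\ot)$ is obtained from $R_1,R_2,R_3$ by applying field operations in $\E$; hence $\cP'(\ot) \in \E^3$, and the very same root tower witnesses the radical condition. For the defining polynomial, if $F'(x,y,z) \in \F[x,y,z]$ defines $\ss'$, then the identity $F'(\phi_1,\phi_2,\phi_3) \equiv 0$ holds on $\ss$, i.e. it is an element of the ideal of $\ss$ in $\F[x,y,z]$ localized at the denominators of the $\phi_i$; substituting the parametrization $\cP(\ot)$, which satisfies $F(\cP(\ot)) = 0$, yields $F'(\cP'(\ot)) = 0$. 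For the rank condition, birationality implies that $\Phi$ restricts to a dominant rational map of finite degree between the surfaces, so the composite $\Phi \circ \cP$ is again dominant onto $\ss'$; therefore its image is not contained in any curve and the Jacobian of $\cP'$ with respect to $\ot$ must have rank $2$.

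There is no real obstacle. The only mildly delicate point is to make sure $\cP(\ot)$ is not entirely contained in the indeterminacy locus of $\Phi$, but since $\cP$ is a dominant parametrization of $\ss$ and the indeterminacy locus of $\Phi$ has codimension at least $1$ in $\ss$, a generic choice of representatives (or, more formally, the fact that $\Phi$ is defined on a Zariski-dense open subset of $\ss$) guarantees that $\cP'(\ot)$ is well defined as an element of $\E^3$. Applying the same argument to $\Phi^{-1}$ yields the converse implication, completing the proof.
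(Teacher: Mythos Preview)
Your proposal is correct and follows exactly the paper's approach: the proposition is stated as an immediate consequence of the preceding observation that composing a radical parametrization with a rational map of finite degree yields a radical parametrization of the image. You have simply unpacked that observation carefully, verifying the three conditions of Definition~\ref{def-rad-param} and handling the indeterminacy locus, which the paper leaves implicit.
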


Let us see some families of surfaces that can be easily parametrized by radicals. The basic idea in all cases below is to achieve a degree 4 polynomial from the implicit equation that we solve by radicals. We recall that $\ss$ is assumed to be irreducible. Note that, in all cases below, the surfaces are parametrized over the field of definition of $\ss$, i.e. over the smallest field where the defining polynomial $F$ can be expressed.

\paragraph{Case 1} If the partial degree of $F$ w.r.t. one of the variables is less or equal to 4, then $\ss$ is parametrizable by radicals. Let
\[
 F(x,y,z)=f_4(x,y) z^4 + f_3(x,y) z^3 + f_2(x,y) z^2 + f_1(x,y) z + f_0(x,y),
\]
then $g(z)=F(t_1,t_2,z)\in \K[z]$ has degree at most 4. Let $R(\ot)$ be a root of $g(z)$ (note that $g$ is solvable by radicals over $\K$) then
\[
 (t_1,t_2,R(\ot))
 \]
is a rational parametrization of $\ss$.

\paragraph{Case 2}An extension of the above situation is as follows. Let
\[
 F(x,y,z)= F_{m_1}(x,y) z^{sm_1+4} + F_{m_2}(x,y) z^{sm_2+3} + F_{m_3}(x,y) z^{sm_3+2}
\]
\[
 +F_{m_4}(x,y) z^{sm_4+1} + F_{m_5}(x,y)
\]
where $s,m_i\in\N$, and $F_i$ is homogeneous of degree $m_i$. We consider the rational transformation
\[\begin{array}{rccc}
 \Phi: & \Sigma:=\ss \setminus \{(a,b,c)\in \ss\,|\, c\neq 0\} &\rightarrow & \Phi(\Sigma)\subset \K^3 \\
   & (x,y,z) & \mapsto & (xz^s,yz^s,z)
\end{array}\]
Note that $\Phi$ is birational, indeed
\[\begin{array}{rccc}
 \Phi^{-1}: & \Phi(\Sigma) &\rightarrow & \Sigma \subset \K^3 \\
   & (x,y,z) & \mapsto & \left(\displaystyle\frac{x}{z^s},\frac{y}{z^s},z \right)
\end{array}\]
Then, we consider the surface $\tilde{\ss}$ defined as the Zariski closure of $\Phi(\Sigma)$. It is defined by
\[
 \tilde{F}(x,y,z) = F\left(\frac{x}{z^s},\frac{y}{z^s},z\right) = F_{m_1}(x,y) z^4 + F_{m_2}(x,y) z^3 + F_{m_3}(x,y) z^2
\]
\[
 + F_{m_4}(x,y) z + F_{m_5}(x,y)
\]
that corresponds to our first case. So, it is parametrizable by radicals. Let $(R_1(\ot),R_2(\ot),R_3(\ot))$ be a radical parametrization of $\tilde{\ss}$. Then,
\[
 \left(\frac{R_1(\ot)}{R_3(\ot)^s}, \frac{R_2(\ot)}{R_3(\ot)^s}, R_3(\ot)\right)
\]
parametrize $\ss$ radically.

\paragraph{Case 3} Let
\[
 F(x,y,z)=f(x,y) z^m-g(x,y).
\]
then $\ss$ is parametrizable by
\[
 \left(t_1,t_2, \sqrt[m]{\frac{g(t_1,t_2)}{f(t_1,t_2)}} \right).
\]
Note that if $f=1$ and $g(x,y)=-(x^m+y^m+1)$ we get the Fermat surfaces.

\paragraph{Case 4} A natural extension of the previous case is as follows. Let
\[
 F(x,y,z)= f_4(x,y) z^{4m} + f_3(x,y) z^{3m} + f_2(x,y) z^{2m} + f_1(x,y) z^m + f_0(x,y).
\]
If $(R_1(\ot),R_2(\ot),R_3(\ot))$ is a radical parametrization of the surface defined by
\[
 f_4(x,y) z^4 + f_3(x,y) z^3 + f_2(x,y) z^2 + f_1(x,y) z + f_0(x,y)
\]
then
\[
 \left({R_1(\ot)},{R_2(\ot)},\sqrt[m]{R_3(\ot)}\right)
\]
is a radical parametrization of $\ss$.

\section{Radical Parametrization by Lines}\label{sec-parametrization-by-lines}

In this section we will see how the idea of rationally parametrizing by lines can be extended to the case of radical parametrizations. In this section, we assume that $\ss$ has degree $n$ and $P\in \ss$ is an $(n-r)$-fold point, where $r\leq 4$. We consider a plane $\Pi$ such that $P\not\in \Pi$. Then, the projection from $P$ of $\ss$ over $\Pi$ is $r:1$. Therefore, using that univariate polynomials of degree at most $4$ are soluble by radicals we can generate a radical parametrization of $\ss$.

More precisely, the algorithmic reasoning is as follows: let $\cH(t_1,t_2)$ be a polynomial parametrization of the plane $\Pi$. We consider the line $\cL$ passing through $P$ and a generic point of $\Pi$. $\cL$ can be defined by $L(h)=P+h(\cH(t_1,t_2)-P)$ where $h$ is a new parameter. In this situation, we compute the interesection ${\cal L}\cap \ss$. The polynomial $f(h)=F(L(h))\in \K(h)$ has degree $n$ and factors as
\[
 f(h)=h^{n-r} g(h)
\]
so $\deg(h(g))=r\leq 4$. Therefore, it is solvable by radicals. This means that we can express $h$ as a radical expression on $\ot$, say $h=R(\ot)$, from where one concludes that $L(R(\ot))$ is a radical parametrization of $\ss$. Thus, we have the following theorem:

\begin{theorem}\label{theorem-param-by-lines}
Every irreducible surface of degree $n$ having an $(n-r)$-fold point is parametrizable by radicals.
\end{theorem}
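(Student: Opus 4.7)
The plan follows the outline already given in the paragraph preceding the statement; I will flesh out the details and point out where care is needed. Fix an $(n-r)$-fold point $P\in\ss$ with $r\le 4$, and choose an affine plane $\Pi\subset\F^3$ with $P\notin\Pi$ together with a polynomial (in fact affine-linear) parametrization $\cH(t_1,t_2)$ of $\Pi$. Form the line
\[
 L(h)=P+h\bigl(\cH(t_1,t_2)-P\bigr)\in \F[t_1,t_2,h]^3,
\]
so that $L(0)=P$ and $L(1)=\cH(t_1,t_2)$. The first step is to substitute $L(h)$ into $F$ and regard the result $f(h):=F(L(h))$ as a univariate polynomial of degree $n$ in $h$ over $\K=\F(\ot)$.

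The second step is to exploit the multiplicity of $P$ to lower the degree. Because $P$ is $(n-r)$-fold on $\ss$, the line $L$ (which passes through $P$) meets $\ss$ at $P$ with multiplicity at least $n-r$, so $h^{n-r}$ divides $f(h)$; writing $f(h)=h^{n-r}g(h)$ we get $\deg_h g\le r\le 4$. Here it is important to check that $f$ does not identically vanish; this is guaranteed because $\ss$ is irreducible, $P\not\in\Pi$, and a generic line through $P$ is not contained in $\ss$.

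The third step is classical Galois theory: every univariate polynomial of degree at most $4$ over any field of characteristic zero is solvable by radicals. Applied to $g(h)\in\K[h]$, this yields a root tower
\[
 \K=\K_0\subset\K_1\subset\cdots\subset\K_s=\E
\]
and an element $R(\ot)\in\E$ with $g(R(\ot))=0$. Set $\cP(\ot):=L(R(\ot))\in\E^3$. By construction $F(\cP(\ot))=f(R(\ot))=R(\ot)^{n-r}g(R(\ot))=0$, so $\cP(\ot)$ lands on $\ss$ and its entries lie in the required field.

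The final step, which I expect to be the only genuinely delicate point, is to verify the rank condition on the Jacobian of $\cP$ with respect to $\ot$. Geometrically this amounts to saying that as $(t_1,t_2)$ varies, the resulting point sweeps out a 2-dimensional subset of $\ss$, not a curve. This follows because the projection from $P$ to $\Pi$ is a dominant $r$-to-$1$ map from $\ss$ to $\Pi$ (since $P$ is an $(n-r)$-fold point and $\ss$ is irreducible of degree $n$), so composing with the polynomial parametrization $\cH$ of $\Pi$ produces a dominant map from the parameter space to $\ss$; hence the generic Jacobian rank is $2$. A branch $R(\ot)$ must be chosen so as to avoid identically landing on a component where the map degenerates, but since $g$ has only finitely many roots and at least one generic branch corresponds to the dominant projection, such a choice exists. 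Assembling these steps gives the claimed radical parametrization of $\ss$.
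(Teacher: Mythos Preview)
Your argument is correct and follows essentially the same route as the paper: the paper's proof is precisely the paragraph preceding the theorem (projection from the $(n-r)$-fold point onto a plane, factor out $h^{n-r}$, solve the residual degree-$\le 4$ polynomial by radicals). You supply more care than the paper does on two points it leaves implicit---that $f(h)$ is not identically zero and that the Jacobian has rank $2$---but the strategy and all key steps coincide.
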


In addition, note that the following algorithm can be derived.

\begin{algorithm}[Radical Parametrization by Lines]
Given an irreducible surface $\ss$ defined by $F$ and a point $P=(a,b,c)\in \ss$ of multiplicity $\deg(F)-r$ such that $r\leq 4$, the algorithm generates a radical parametrization of $\ss$.
\begin{enumerate}
 \item Let $L(h)=P+h(t_1-a,t_2-b,\lambda-c)$, where $\lambda\neq c$.
 \item Compute, by radicals, the roots of $g(h)=\displaystyle\frac{L(h)}{h^{\deg(F)-r}}$; say $R(\ot)$ is one of the roots.
 \item Return $\cP(\ot)={\cal H}(R(\ot),t_1,t_2))$.
\end{enumerate}
\end{algorithm}

We illustrate the algorithm with an example.

\begin{example}\label{ex-2}
We consider the surface $\ss$ (see Fig. \ref{fig-1}) defined by
\[
 F(x,y,z)={x}^{10}+{y}^{10}+{z}^{10}-xyz^4.
\]
It is a degree 10 surface with a 6-fold point at the origin. Therefore, the algorithm is applicable.
\begin{figure}
  \centering
    \includegraphics[width=5cm]{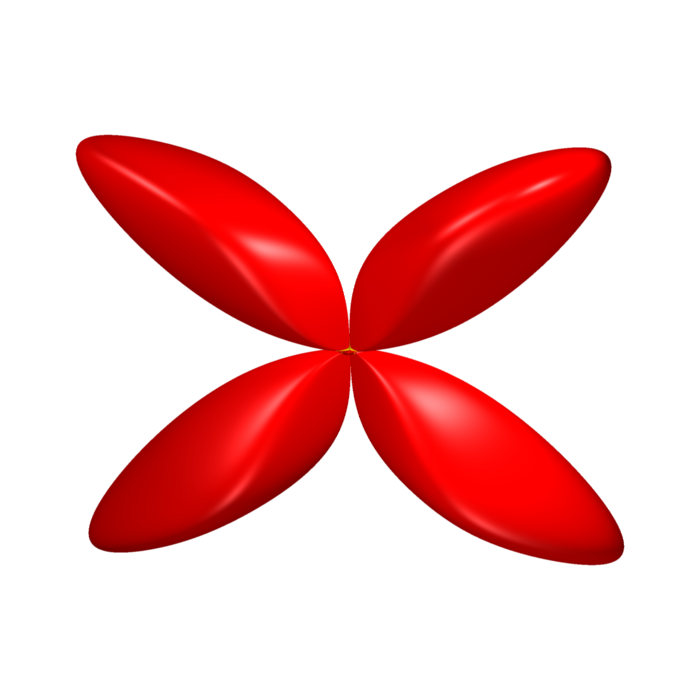}
  \caption{Surface in Example \ref{ex-2}}
  \label{fig-1}
\end{figure}
In Step 1 we get
\[
 L(h)=(h t_1,h t_2,h)
\]
In Step 2, the polynomial $g(h)$ is
\[
 g(h)=h^4{t_1}^{10}+h^4{t_2}^{10}+h^4-t_1t_2.
\]
Computing the roots one may take
\[
 R(h)={\frac {\sqrt { \left( {t_1}^{10}+{t_2}^{10}+1 \right) \sqrt { \left( {t_1}^{10}+{t_2}^{10}+1 \right) t_1t_2}}}{{t_1}^{10}+{t_2}^{10}+1}}.
\]
Thus, the algorithm generates the radical parametrization
\[
 \cP(\ot)=\left({\frac {\sqrt { \left( {t_1}^{10}+{t_2}^{10}+1 \right) \sqrt { \left( {t_1}^{10}+{t_2}^{10}+1 \right) t_1t_2}}\,\,t_1}{{t_1}^{10}+{t_2}^{10}+1}}, \right.
\]
\[
 \left. {\frac {\sqrt { \left( {t_1}^{10}+{t_2}^{10}+1 \right) \sqrt { \left( {t_1}^{10}+{t_2}^{10}+1 \right) t_1t_2}}\,\,t_2}{{t_1}^{10}+{t_2}^{10}+1}}, \right.
\]
\[
 \left. {\frac {\sqrt { \left( {t_1}^{10}+{t_2}^{10}+1 \right) \sqrt { \left( {t_1}^{10}+{t_2}^{10}+1 \right) t_1t_2}}}{{t_1}^{10}+{t_2}^{10}+1}} \right).
\]
\end{example}

From Theorem \ref{theorem-param-by-lines}, one deduces the following corollaries:

\begin{corollary}\label{cor-1}
Every irreducible surface of degree less or equal 5 is parametrizable by radicals.
\end{corollary}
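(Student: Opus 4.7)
The plan is to apply Theorem \ref{theorem-param-by-lines} directly, with the observation that when the degree $n$ is small the requirement on the multiplicity of the base point becomes trivial. Concretely, Theorem \ref{theorem-param-by-lines} requires a point $P\in\ss$ of multiplicity $n-r$ with $r\leq 4$, i.e.\ a point of multiplicity at least $n-4$. For $n\leq 5$ this amounts to multiplicity at least $\max\{1,n-4\}=1$, so \emph{any} point of $\ss$ meets the hypothesis of the theorem.

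The first step is to produce such a point. Since $\F$ is algebraically closed of characteristic zero and $\ss$ is an irreducible surface in $\F^3$, the set $\ss$ is nonempty (in fact, of dimension $2$), so we can pick any $P=(a,b,c)\in\ss$. Let $m$ denote the multiplicity of $\ss$ at $P$; since $P\in\ss$ we have $m\geq 1$. Setting $r:=n-m$, we get $r\leq n-1\leq 4$, so $P$ is an $(n-r)$-fold point with $r\leq 4$. Theorem \ref{theorem-param-by-lines} then yields a radical parametrization of $\ss$. In the extreme cases $n=5$ and $P$ smooth, we have $r=4$, and the cubic/quartic step that arises from factoring $F(L(h))/h^{n-r}$ remains solvable by radicals precisely because $r\leq 4$; this is the tight case that drives the bound.

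There is no real obstacle; the only point worth noting is why the multiplicity bound $r\leq 4$ is exactly what is needed and cannot be relaxed further by this method. Once $r\geq 5$, the residual factor $g(h)$ in the proof of Theorem \ref{theorem-param-by-lines} would have degree $\geq 5$ and the Galois-theoretic argument for solvability by radicals breaks down in general; this is why the corollary stops at degree $5$ and why the next corollary (singular surfaces of degree $6$) requires an \emph{additional} singularity to bring $r$ back down into the solvable range.
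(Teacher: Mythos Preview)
Your proof is correct and follows exactly the same approach as the paper: pick any point of $\ss$ (which has multiplicity at least $1\geq n-4$) and apply Theorem~\ref{theorem-param-by-lines}. The additional commentary on why the bound $r\leq 4$ is tight is accurate but goes beyond what the paper's one-line proof provides.
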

\begin{proof}
Take a point of the surface and apply Theorem \ref{theorem-param-by-lines}.
\end{proof}

\begin{corollary}\label{cor-2}
Every singular irreducible surface of degree less or equal 6 is parametrizable by radicals.
\end{corollary}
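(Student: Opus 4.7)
My plan is to derive this corollary directly from Theorem \ref{theorem-param-by-lines} in essentially the same way as Corollary \ref{cor-1}, the only extra ingredient being the translation between ``singular'' and the multiplicity hypothesis of the theorem.

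Let $\ss$ be an irreducible singular surface with $\deg(F)=n\le 6$. First I would dispose of the cases $n\le 5$ by appealing to Corollary \ref{cor-1}, since there no singularity assumption is even required. So I may assume $n=6$.

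Next, pick any singular point $P\in \ss$, which exists by hypothesis. By definition of singularity, the multiplicity $m$ of $P$ on $\ss$ satisfies $m\ge 2$. Writing $m=n-r$ with $n=6$ gives $r=6-m\le 4$. Therefore $P$ is an $(n-r)$-fold point of $\ss$ with $r\le 4$, and Theorem \ref{theorem-param-by-lines} applies verbatim, yielding a radical parametrization of $\ss$.

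There is essentially no obstacle here: the only mild subtlety is to check that a generic singular point actually produces a polynomial $g(h)$ of degree exactly $r$ after stripping the factor $h^{n-r}$ (so that the degree-at-most-$4$ radical solvability can be invoked), but this is already built into the Taylor-expansion argument preceding Theorem \ref{theorem-param-by-lines}, and does not need to be re-examined in the corollary. Thus the proof is just a one-liner of the form ``Take a singular point of $\ss$ and apply Theorem \ref{theorem-param-by-lines}.''
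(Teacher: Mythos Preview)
Your proposal is correct and takes essentially the same approach as the paper: the paper's proof is literally ``Take a singular point of the surface and apply Theorem \ref{theorem-param-by-lines}.'' Your extra remarks about the $n\le5$ case and the multiplicity inequality $r=6-m\le4$ simply spell out what the paper leaves implicit.
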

\begin{proof}
Take a singular point of the surface and apply Theorem \ref{theorem-param-by-lines}.
\end{proof}

\section{Radical Parametrization of Surfaces with a Pencil of Low Genus Curves}\label{sec-pencil}

In this section we consider the case when $\ss$ has a pencil of curves $\ccs$ with genus less or equal 4. If the genus $\gg$ of the curves in the pencil is zero, it is known (see e.g. \cite{N70}, \cite{P97}, \cite{Sch-a}, \cite{Sch-b}) that $\ss$ is rational. We analyze the situation when $1\leq \gg \leq 4$, and we will be able to prove (providing and algorithm) that for $\gg\in \{2,3\}$ the surface $\ss$ is radical. Moreover, for $\gg\in\{1,4\}$, with some additional hypotheses, we also prove that the surface is radical, and in the case $\gg=4$ we offer an alternative method based on the fact that such curves are known to be trigonal (that is, they admit a $3:1$ map to the line).

We see $\ccs$ as a space curve in $\F(s)^3$. Then we can consider a projection $\pi:\F(s)^3\rightarrow \F(s)^2$ mapping $\ccs$ birationally onto a plane curve. Thus, we may assume w.l.o.g. that $\ccs$ is indeed a pencil of plane curves. Let us also assume w.l.o.g. that its defining polynomial is $F(x,y,s)$.

In the following reasoning we will apply to $\ccs$ the radical parametrization algorithms for curves given in \cite{SeSevilla}. So, for the sake of completeness, we recall here the main steps of the two main algorithms.

\begin{algorithm}[Radical parametrization of curves of genus $\leq 3$]\label{alg curves g<=3}
Let $G(x,y,w)$ be the homogeneous form defining a projective plane curve of genus $\gg \leq 3$ and degree $d$, then the algorithm computes a radical parametrization of the affine curve defined by $G(x,y,1)$ in terms of the variable $t$.
 \begin{enumerate}
  \item Compute the linear system of adjoints $\A_{d-2}$ of degree $d-2$.
  \item Take $(d-3)+\gg$ simple points on the curve.
  \item Compute the linear subsystem $\A_{d-2}^*$ of $\A_{d-2}$ obtained by forcing $\A_{d-2}$ to pass through all the simple points computed in Step 2. Let $\A_{d-2}^*$ be defined by the polynomial $H^*(x,y,w,t)$.
  \item Determine
  \[
   S_1(x,w,t)=\pp_{t}(\res_{y}(G,H^*)), \quad S_2(y,w,t)=\pp_{t}(\res_{x}(G,H^*))
  \]
  where $\pp_{t}$ denotes the primitive part with respect to $t$.
  \item Solve by radicals $S_{1}(x,1,t)$ and $S_{2}(y,1,t)$. Combining the roots of both polynomials (see \cite{SeSevilla} for further details on how to combine the roots) one gets the radical parametrization.
 \end{enumerate}
\end{algorithm}

\begin{algorithm}[Radical parametrization of curves of 2 $\leq$ genus $\leq 4$]\label{alg curves 2<=g<=4}
Let $G(x,y,w)$ be the homogeneous form defining a projective plane curve of genus $2\leq \gg \leq 4$ and degree $d$, then the algorithm computes a radical parametrization of the affine curve defined by $G(x,y,1)$ in terms of the variable $t$.
 \begin{enumerate}
  \item Compute the linear system of adjoints $\A_{d-3}$ of degree $d-3$.
  \item Take $\gg-2$ simple points on the curve.
  \item Compute the linear subsystem $\A_{d-3}^*$ of $\A_{d-3}$ obtained by forcing $\A_{d-3}$ to pass through all the simple points computed in Step 2. Let $H^*(x,y,w,t)$ be the defining polynomial of $\A_{d-3}^*$.
  \item Follow Steps 4 and 5 in Algorithm 1.
 \end{enumerate}
\end{algorithm}

The idea now is to apply either Algorithm \ref{alg curves g<=3} or Algorithm \ref{alg curves 2<=g<=4} to $\ccs$ as a curve over the algebraic closure of $\F(s)$. In both cases the problem appears in the execution of Step 2, since we need to compute simple points \emph{that are radical} over $\F(s)$, i.e. points whose coordinates are over a root tower of $\F(s)$. In the following we see how to do that for the cases of genus 2 and 3, and some special cases of genus 1 and 4. Note that, in this section, we are slightly changing the notation since the parameters $\ot$ are now $(s,t)$.

\subsection{Genus 2}

\begin{theorem}[Case of genus 2]\label{theorem genus 2}
If $\ss$ has a genus 2 pencil of curves, then $\ss$ is radical.
\end{theorem}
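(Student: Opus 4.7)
The plan is to apply Algorithm \ref{alg curves 2<=g<=4} to $\ccs$, viewed as a plane curve over $\F(s)$, and then lift the resulting curve parametrization back to $\ss$. The decisive feature of the genus $2$ case is that Step 2 of Algorithm \ref{alg curves 2<=g<=4} asks for $\gg-2=0$ auxiliary simple points. Consequently no algebraic point over $\F(s)$ needs to be introduced: the restricted adjoint subsystem $\A_{d-3}^*$ coincides with $\A_{d-3}$ itself and is already defined over $\F(s)$.

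The first step is to run Algorithm \ref{alg curves 2<=g<=4} on $F(x,y,s)$ treated as a bivariate polynomial over $\F(s)$. Computing the adjoint system, forming the resultants with respect to $x$ and $y$, and taking primitive parts in $t$ yields polynomials $S_1(x,1,t)$ and $S_2(y,1,t)$ whose non-trivial factors have degree at most four in $x$ and $y$ respectively, by the analysis in \cite{SeSevilla}. Solving these by radicals and combining the roots via the procedure of \cite{SeSevilla} produces a radical parametrization $(X(s,t),Y(s,t))$ of the plane pencil, whose entries lie in the largest field of a root tower over $\F(s,t)=\F(\ot)$.

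The second step is to lift back to $\ss$ using the birational projection $\pi$ introduced at the beginning of the section. Since $\pi^{-1}$ is rational over $\F(s)$, composing it with $(X(s,t),Y(s,t))$ produces a triple $(R_1,R_2,R_3)$ inside the same root tower, and by construction $F(R_1,R_2,R_3)=0$. To verify the Jacobian condition from Definition \ref{def-rad-param}, observe that for a generic $s_0$ the map $t\mapsto (R_1(s_0,t),R_2(s_0,t),R_3(s_0,t))$ is a non-constant parametrization of the curve of the pencil corresponding to $s=s_0$; since the pencil covers $\ss$, varying both $s$ and $t$ sweeps out a two-dimensional image, so the Jacobian has generic rank $2$.

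The main technical obstacle will be ensuring that Algorithm \ref{alg curves 2<=g<=4} executes faithfully over $\F(s)$ rather than over $\overline{\F(s)}$, in particular that the adjoint system $\A_{d-3}$ is defined over $\F(s)$. This amounts to checking that, generically in $s$, the singularity structure of $\ccs$ is constant and that its contribution to the adjoint conditions can be encoded by polynomial equations with coefficients in $\F(s)$. For irreducible pencils this is standard, although singularities whose coordinates are only algebraic over $\F(s)$ must be handled via symmetric functions so that no spurious algebraic extension is introduced; this is the point that deserves the most care in a full write-up.
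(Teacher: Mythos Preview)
Your proposal is correct and follows exactly the same approach as the paper: the entire proof in the paper is the single sentence ``Note that no simple point is needed in Algorithm \ref{alg curves 2<=g<=4},'' which is precisely your key observation that $\gg-2=0$. You have simply fleshed out the surrounding details (lifting via $\pi^{-1}$, the Jacobian check, and the rationality of the adjoint system over $\F(s)$) that the paper leaves implicit.
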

\begin{proof}
Note that no simple point is needed in Algorithm \ref{alg curves 2<=g<=4}.
\end{proof}

\begin{example}\label{ex-3}
We consider the surface $\ss$ (see Fig. \ref{fig-2}) defined by
\[ \begin{array}{lcl}
 F(x,y,z) &=&y^2+2y^2z+2y^2z^6+y^3+xy-xyz-2yz^6x-4xy^2-2xy^2z-\\
  & & xy^2z^6+z^6x^3+zx^3y+x^3y^2\,.
\end{array} \]
$\ss$ has degree 9, but $\deg_{x}(f)=3$ and hence one can parametrize using case 1 in Section \ref{sec-radicals}.
\begin{figure}
  \centering
    \includegraphics[width=5cm]{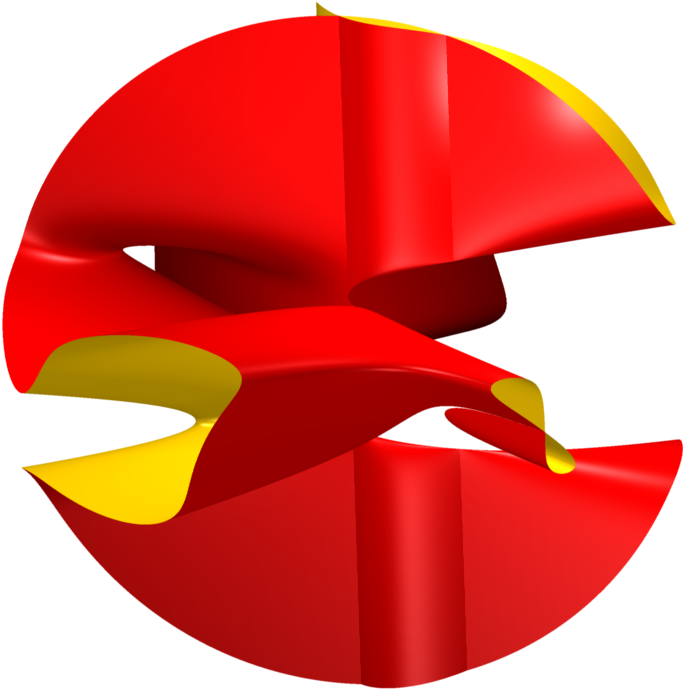}
  \caption{Surface in Example \ref{ex-3}}
  \label{fig-2}
\end{figure}
Nevertheless, we observe that $\ss$ contains the pencil of curves $\ccs$ (see some of the curves in Fig. \ref{fig-3}) defined by
\[
 g(x,y)=F(x,y,s)\in \F(s)[x,y].
\]
Note that $\ccs$, as a curve over the algebraic closure of $\F(s)$, has degree 5. Moreover, its singular (projective) locus is
\[
 \{(0: 1 : 0), (1: 0: 0), (0: 0: 1), (1: 1: 1) \}
\]
where all points are double. So the genus of $\ccs$ is 2, and hence Algorithm \ref{alg curves 2<=g<=4} is applicable. We consider the linear system of adjoints of degree $d-3=2$. Its (affine) defining polynomial is
\[
 h(x,y,t)=(-1-t)y+x+txy
\]
\begin{figure}
  \centering
    \includegraphics[width=5cm]{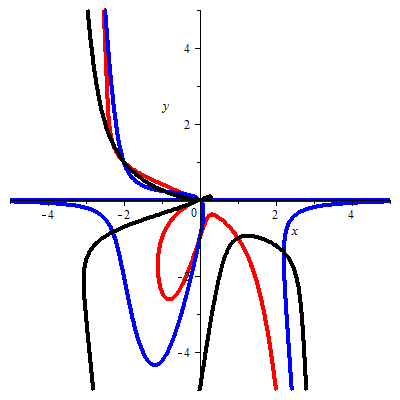}
  \caption{Some curves in the pencil of  Example \ref{ex-3}}
  \label{fig-3}
\end{figure}
Now, we compute the primitive part with respect to $t$ of the resultant of $g$ and $h$ w.r.t. $y$ and $x$ respectively, and get
\[ \begin{array}{ccl} S_1(t,s,x) & = & -t x^2-x^2 s^6 t^3+x^2 s t^2-2 s t x+3 x s^6t^2-t x+x s^6t^3+x \\
                                 &   & +2+s+t^2- s t^2-2 s^6 t-2 s^6 t^2+3 t \\
   S_2(t,s,y) & = & -t^3y^2+2yst^2-ys^6 t^3+ys^6t^2-3yt-2yt^2-y-2+st \\
     & & -s-t+2s^6 t
\end{array} \]
By theory $\deg_{x}(R_1),\deg_{y}(R_2)\leq 4$; indeed the degree is 2 in both cases. Now, computing by radicals the roots of $R_1$ and $R_2$ as polynomials in $x$ and $y$ respectively, we get the radical parametrization of $\ss$
\[
 \left( \frac {-2\,st+3\,s^6t^2-t+s^6t^3+1+\sqrt {A(t,s)}}{2t(1+s^6t^2-st)} \,,\right.
 \]
\[
 \left. - \frac {-2\,st^2+s^6t^3-s^6t^2+3\,t+2\,t^2+1+\sqrt {A(t,s)}}{2t^3}, s \right)
\]
where
\[ \begin{array}{ccl}
A(t,s) & = & 1+6\,t+4\,t^3+13\,t^2-4\,s^6t^3-4\,st^2-2\,s^6t^2+4\,s^7t^4+s^{12}t^4 \\
  & & -2\,s^{12}t^5+10\,s^6t^4+s^{12}t^6-16\,st^3+4\,s^6t^5-4\,s^7t^5-4\,st^4+4\,s^2t^4
\end{array} \]
\end{example}

\subsection{Genus 3}

Now we deal with the case of genus 3. Let $\overline{\ccs}$ denote the projective closure of $\ccs$ and let $\A_{d-3}(\overline{\ccs})$ be the system of adjoints to $\ccs$ of degree $d-3$. If we apply Algorithm \ref{alg curves g<=3}, $d$ simple points are required and if we apply Algorithm \ref{alg curves 2<=g<=4} we need one simple point. Now, we observe that $\dim(\A_{d-3}(\overline{\ccs}))=\gg-1$ and the number of simple intersection points in $\A_{d-3}(\overline{\ccs})\cap \overline{\ccs}$ is, in general, $2(\gg-1)=4$. Therefore this intersection contains, in general, 4 radical simple points on the curve. Taking one of them we generate a radical parametrization of $\ccs$, and hence of $\ss$. Thus, we have the following result.

\begin{theorem}[Case of genus 3]\label{theorem genus 3}
If $\ss$ has a genus 3 pencil of curves, then $\ss$ is radical.
\end{theorem}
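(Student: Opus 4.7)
The plan is to apply Algorithm \ref{alg curves 2<=g<=4} to $\ccs$, viewed as a plane curve of degree $d$ and genus $3$ over the algebraic closure of $\F(s)$. That algorithm requires exactly $\gg-2=1$ simple point on $\ccs$ whose coordinates are radical over $\F(s)$; once such a point is in hand, the rest of the algorithm produces a radical parametrization $(x(s,t),y(s,t))$ of $\ccs$ in an additional parameter $t$, and composing with the inverse of the birational projection used at the start of the section restores the third coordinate, yielding a radical triple $\cP(s,t)\in\E^3$ with $F(\cP(s,t))=0$ and Jacobian of rank $2$ (since $s$ moves through the pencil while $t$ moves along each curve).

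The required radical simple point is manufactured as sketched just before the theorem statement. I would compute the adjoint system $\A_{d-3}(\overline\ccs)$, whose projective dimension equals $\gg-1=2$, and fix any $\F(s)$-linear combination $H_0$ of a basis of this system that realizes the generic intersection behaviour. By the adjunction / Bezout counting, the residual intersection of $H_0$ with $\overline\ccs$, after discounting the forced contribution of the singular locus, consists of exactly $2\gg-2=4$ simple points. Eliminating $y$ (respectively $x$) from the pair $\{F(x,y,s),H_0(x,y)\}$ via a resultant and removing the known factors coming from the singularities leaves two univariate polynomials over $\F(s)$ of degree $4$ whose roots pair up to give the four residual intersection points. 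Because any univariate polynomial of degree at most $4$ is solvable by radicals, each of these four points has coordinates in a radical extension of $\F(s)$, and any one of them can be taken as the simple point $P$ fed into Step 2 of Algorithm \ref{alg curves 2<=g<=4}.

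The main obstacle is the genericity hidden in the phrase ``in general, $2(\gg-1)=4$'': one must verify that $\A_{d-3}(\overline\ccs)$ actually has the expected dimension $\gg-1$ and that $H_0$ can be chosen so that its residual intersection with $\overline\ccs$ is reduced, has length $4$, and lies entirely off the singular locus. These assertions are standard for curves whose singularities behave as ordinary adjoint conditions predict, but in degenerate configurations $H_0$ may need to be readjusted. Once this genericity check is made, the construction above is algorithmic and yields the claimed radical parametrization of $\ss$.
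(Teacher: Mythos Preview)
Your proposal is correct and follows essentially the same route as the paper: pick a generic element of $\A_{d-3}(\overline{\ccs})$, whose residual intersection with $\overline{\ccs}$ consists of (in general) $2(\gg-1)=4$ simple points, extract one of them by solving two degree-$4$ resultants by radicals, and feed it into Algorithm~\ref{alg curves 2<=g<=4}. You are, if anything, more explicit than the paper about the genericity caveat; the paper simply writes ``in general'' and ``for almost all selections it will work'' without further justification.
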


Computationally, the question remains on how to compute one of these 4 radical simple points. The idea is as follows. Take an element in $\A_{d-3}(\overline{\ccs})$; for almost all selections it will work. Say that $M(x,y,w)$ is its defining polynomial. Then take $A_1(x,s)= \res_y(F(x,y,s),M(x,y,1))$ and $A_2(y,s)=\res_x(F(x,y,s),M(x,y,1))$. The roots of $A_1$ are the $x$-coordinates of the affine points in $\A_{d-3}(\overline{\ccs})\cap \overline{\ccs}$; similarly for $A_2$. So, crossing out the factors coming from the singularities of $\ccs$, we get two univariate polynomials (one in $x$ and the other in $y$) of degree at most 4. Solving them by radicals and recombining the results one gets the radical points.

\begin{example}\label{ex-4}
We consider the surface $\ss$ (see Fig. \ref{fig-3b}) defined by
\[ \begin{array}{lcl}
 F(x,y,z) & = & 2\,y-2\,y^2-y^2z^5-2\,y^3+2\,y^3z^5-y^4z^5+2\,y^4+z^5x^2y^2 \\
   & & +x^3+x^3y^2-2\,x^4-2\,x^4y+x^5
\end{array} \]
\begin{figure}
  \centering
    \includegraphics[width=5cm]{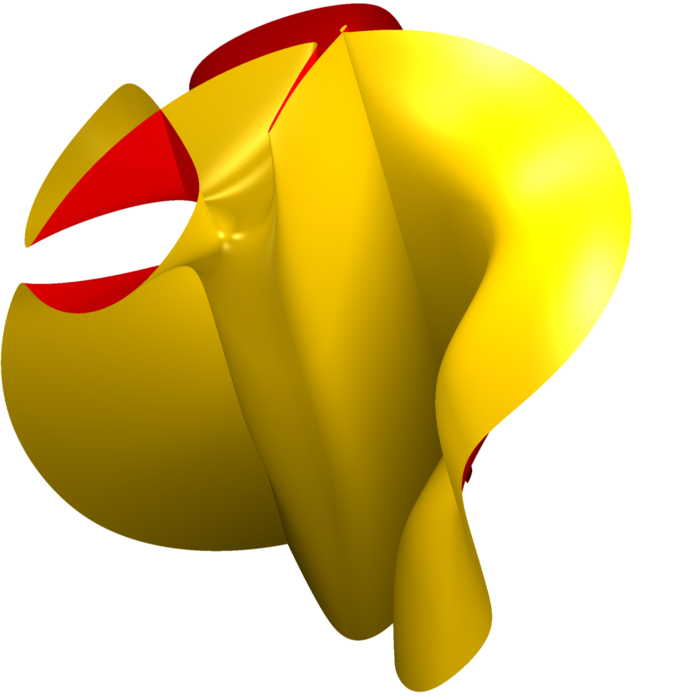}
  \caption{Surface in Example \ref{ex-4}}
  \label{fig-3b}
\end{figure}
$\ss$ has degree 9 and contains the pencil of curves $\ccs$ defined by
\[
 g(x,y)=F(x,y,s)\in \F(s)[x,y].
\]
Note that $\ccs$, as a curve over the algebraic closure of $\F(s)$, has degree 5. Moreover, its singular (projective) locus is
\[
 \{ (0:1:1),(1:1:0),(1:0:1) \}
\]
where all points are double. So the genus of $\ccs$ is 3. We apply the ideas in Theorem \ref{theorem genus 3}. The adjoints of degree $d-3=2$ are defined by
\[ \begin{array}{rcl}
 H(x,y,w,\lambda_1,\lambda_2,\lambda_3) & = & \left( -\lambda_1-\lambda_2 \right) w^2 + \left(\lambda_3+2\lambda_1+\lambda_2 \right) yw + \left( -\lambda_3-\lambda_1 \right) y^2 \\
  & & + \lambda_2 xw+\lambda_3 xy+\lambda_1 x^2.
\end{array} \]
We take a curve in the system. For instance
\[
 M(x,y,1):=H(x,y,1,-1,1,0)=-y+y^2+x-x^2.
\]
Now, computing the intersection of the curve defined by $M(x,y,1)$ and $\ccs$ we get, at most, 4 (affine) radical simple points on $\ccs$. For this purpose, we compute
\[ \begin{array}{lcl}
 A_1(x,t) & = & \res_y(F(x,y,s),M(x,y,1)), \\
 A_2(y,t) & = & \res_x(F(x,y,s),M(x,y,1)).
\end{array} \]
We get
\[ \begin{array}{lcl}
 A_1 & = & 4\,x^3 \left( x+1 \right) \left( x-1 \right)^2 \left( 2\,s^5x^2-x^2-2\,x-s^5x+2 \right) \\
 A_2 & = & 4\,y^3 \left( y-2 \right) \left( -1+y \right)^2 \left( 2\,y^2s^5-y^2-2\,y-ys^5+2 \right)
\end{array} \]
which generate the following affine radical points in $\ccs$:
\[ \left\{ (-1,2),(0,0),(0,1),(1,0),\right. \]
\[
 \left(\frac{1}{2}\,{\frac {2+{s}^{5}-\sqrt {12-12\,{s}^{5}+{s}^{10}}}{2\,{s}^{5}-1}},\frac{1}{2}\,{\frac {2+{s}^{5}-\sqrt {12-12\,{s}^{5}+{s}^{10}}}{2\,{s}^{5}-1}}\right),
\]
\[
 \left. \left(\frac{1}{2}\,{\frac {2+s^5+\sqrt {12-12\,s^5+s^{10}}}{2\,s^5-1}},\frac{1}{2}\,{\frac {2+s^5+\sqrt {12-12\,s^5+s^{10}}}{2\,s^5-1}} \right) \right\}.
\]
Note that $(1,0)$ and $(0,1)$ are singular. We now use one of these simple points, say $(-1,2)$, to reduce the dimension of the system of adjoints down to one. We get that the defining polynomial of the 1-dimensional resulting subsystem is
\[
 H^*(x,y,w,t)= -w^2-tw^2+ \left( 2w+tw \right) y-y^2+txw+x^2
\]
In this situation, the theory ensures that the intersection of $H^*$ and $\overline{\ccs}$ leaves, at most, 4 intersections points different to the singularities and to the simple point $(-1,2)$. Computing these intersection points we reach the radical parametrization. For this purpose, we determine the primitive part with respect to $t$ of the resultant of $H^*(x,y,1,t)$ and $F(x,y,s)$ w.r.t. $y$ and $x$ respectively to get
\[ \begin{array}{lcl}
 S_1(x,t,s) & = & 8\,x^3+10\,x^3t-2\,s^5tx^3+x^3t^2+18\,x^2t+4\,x^2-5\,s^5t^2x^2 \\
   & & +12\,t^2x^2-4\,x^2s^5t-4\,xs^5t^3+8\,tx+8\,t^3x+18\,t^2x-2\,xs^5t \\
   & & -6\,xs^5t^2-2\,s^5t^3+6\,t^3+2\,t^4+4\,t^2-s^5t^4-s^5t^2 \\
 S_2(y,t,s) & = & -8\,y^3-10\,y^3t+2\,y^3s^5t-y^3t^2+3\,t^3y^2+21\,y^2t^2-2\,y^2s^5t \\
   & & +20\,y^2+36\,y^2t-y^2s^5t^2-16\,y-20\,t^3y-45\,yt^2-3\,t^4y \\
   & & -42\,yt+t^5+16\,t+4+19\,t^3+7\,t^4+25\,t^2
\end{array}\]


As expected, $\deg_x(R_1)=\deg_y(R_2)=3$. Solving by radicals these polynomials we get the parametrization
\[ \left(\frac{1}{6}\,{\frac {\Sigma}{ \left( -10\,t+2\,{s}^{5}t-8-{t}^{2} \right)
\sqrt [3]{\Gamma }}},\frac{1}{6}\,{\frac {\Sigma}{ \left( -10\,t+2\,{s}^{5}t-8
-{t}^{2} \right) \sqrt [3]{\Gamma }}},s\right) \]
where
\[ \begin{array}{lcl}
\Delta &= &  - ( 2+t )^{2} ( -108\,{t}^{8}-304\,{t}^{2}-1284\,{t}^
{5}{s}^{10}+2932\,{t}^{6}{s}^{5}+252\,{s}^{15}{t}^{4}\\ & &+144\,{s}^{15}{t}
^{2}+264\,{s}^{15}{t}^{3}+132\,{s}^{15}{t}^{5}-662\,{s}^{10}{t}^{6}+
876\,{t}^{7}{s}^{5}+36\,{s}^{15}{t}^{6}\\ &&-204\,{s}^{10}{t}^{7}+108\,{t}^
{8}{s}^{5}+32\,{s}^{15}t-64\,{s}^{5}-3168\,{t}^{3}+256\,{s}^{5}t+128\,
t\\ & & +5240\,{t}^{4}{s}^{5}+3384\,{t}^{3}{s}^{5}+1408\,{s}^{5}{t}^{2}+16\,{
s}^{10}-27\,{t}^{8}{s}^{10}+4\,{t}^{7}{s}^{15}\\ & & -1616\,{s}^{10}{t}^{3}-
192\,{s}^{10}t-888\,{s}^{10}{t}^{2}-1731\,{s}^{10}{t}^{4}+5156\,{t}^{5
}{s}^{5}-6944\,{t}^{5}\\ & & -3612\,{t}^{6}-968\,{t}^{7}-6880\,{t}^{4})\\
\Gamma & =& 512+216\,{t}^{8}-10368\,{t}^{2}-5256\,{t}^{5}{s}^{10}+2700\,{t}^{6}{s}
^{5}+96\,{s}^{15}{t}^{4}+64\,{s}^{15}{t}^{3}\\ & & +48\,{s}^{15}{t}^{5}-1944
\,{s}^{10}{t}^{6}-72\,{t}^{7}{s}^{5}+8\,{s}^{15}{t}^{6}-288\,{s}^{10}{
t}^{7}-108\,{t}^{8}{s}^{5}\\  & &-10368\,{t}^{3}+768\,{s}^{5}t-2304\,t+22176
\,{t}^{4}{s}^{5}+21312\,{t}^{3}{s}^{5}+9024\,{s}^{5}{t}^{2}\\ & & +24\,{t}^{2
}\sqrt {3}\sqrt {\Delta}{s}^{5}-12\,{t}^{3}\sqrt {3}\sqrt {\Delta}-96
\,t\sqrt {3}\sqrt {\Delta}-120\,{t}^{2}\sqrt {3}\sqrt {\Delta}\\ & &-5088\,{
s}^{10}{t}^{3}-1344\,{s}^{10}{t}^{2}-7248\,{s}^{10}{t}^{4}+11664\,{t}^
{5}{s}^{5}+7056\,{t}^{5}\\ & & +4752\,{t}^{6}+1512\,{t}^{7}+576\,{t}^{4} \\
\Sigma & = &{\Gamma }^{2/3}-192\,t-1008\,{t}^{2}-1296\,{t}^{3}-600\,{t}^{4}-96\,{t
}^{5}+272\,{s}^{5}{t}^{2}+456\,{t}^{3}{s}^{5}\\ & &+64\,{s}^{5}t+264\,{t}^{4
}{s}^{5}+16\,{s}^{10}{t}^{3}+16\,{s}^{10}{t}^{2}+4\,{s}^{10}{t}^{4}+48
\,{t}^{5}{s}^{5}+64\\ && +36\,\sqrt [3]{\Gamma }t+8\,\sqrt [3]{\Gamma }-10\,
\sqrt [3]{\Gamma }{s}^{5}{t}^{2}+24\,\sqrt [3]{\Gamma }{t}^{2}-8\,
\sqrt [3]{\Gamma }{s}^{5}t
\end{array}
\]
 \end{example}

\subsection{Genus 1}

For the case of genus 1, only Algorithm \ref{alg curves g<=3} is applicable and, in Step 2, $d-2$ simple points are required. So let us assume that $\ccs$ has a radical double point $P$. Then, intersecting $\ccs$ with a line defined over $\F$ and passing through the double point, one can take a family of $d-2$ $\F(s)$-conjugate points (see Def. 3.15 in \cite{libro}) that can be used in Step 2. Moreover, the subsystem of adjoints $\A_{d-2}^*$ has defining polynomial over $\F(s)$ (see Lemma 3.19 in \cite{libro}). Therefore, we get the following theorem:

\begin{theorem}[Case of genus 1; first part]\label{theorem genus 1}
If $\ss$ has a genus 1 pencil of curves with a double radical point, then $\ss$ is  radical.
\end{theorem}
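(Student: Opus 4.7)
The goal is to set up Step~2 of Algorithm~\ref{alg curves g<=3} so that it can be carried out over a radical extension of $\F(s)$. For genus $1$, that step requires $(d-3)+1=d-2$ simple points of $\ccs$; the hypothesis that a double radical point $P$ of $\ccs$ exists is precisely what allows us to produce such points radically, by slicing $\ccs$ with a line through $P$.

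First, let $\E$ be a radical extension of $\F(s)$ containing the coordinates of $P$. I would take a line $\ell$ through $P$ with a generic direction $v\in\F^{2}$, parametrize it as $L(h)=P+h\,v$, and substitute into the defining polynomial $g(x,y)=F(x,y,s)$ of $\ccs$. The resulting polynomial $g(L(h))\in\E[h]$ has degree $d$, and since $P$ has multiplicity $2$ on $\ccs$, it factors as $h^{2}\,\widetilde{g}(h)$ with $\widetilde{g}$ of degree $d-2$. For a generic choice of $v$, the roots of $\widetilde{g}$ correspond to $d-2$ distinct simple points of $\ccs$, which form a single $\E$-conjugate family in the sense of Definition~3.15 of \cite{libro}.

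Next, I would feed this conjugate family into Step~2 of Algorithm~\ref{alg curves g<=3}. By Lemma~3.19 of \cite{libro}, the linear subsystem $\A_{d-2}^{*}$ obtained by forcing $\A_{d-2}$ through a conjugate family has a defining polynomial $H^{*}(x,y,w,t)$ whose coefficients lie in $\E[t]$, even though no individual member of the family need lie in $\E^{2}$. From here the resultants $\res_{y}(F,H^{*})$ and $\res_{x}(F,H^{*})$ of Step~4 have coefficients in $\E[s,t]$, their primitive parts $S_{1}$ and $S_{2}$ with respect to $t$ are univariate of degree at most $4$ in $x$ or $y$ by the usual adjoint/degree bookkeeping that makes Algorithm~\ref{alg curves g<=3} work, and solving them by radicals followed by the root-combination procedure of \cite{SeSevilla} produces a radical parametrization of $\ccs$ over a field that is radical over $\F(s,t)$. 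This is a radical parametrization of $\ss$ in the two parameters $(s,t)$.

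The main obstacle---and the reason the theorem needs the extra hypothesis on $P$---is precisely the first step: producing $d-2$ simple points on $\ccs$ \emph{radically} over $\F(s)$. For genus~$2$ the dimension count in Algorithm~\ref{alg curves 2<=g<=4} eliminates the need for auxiliary points altogether, and for genus~$3$ the residual intersections of $\A_{d-3}(\ccc)$ with $\ccs$ can be captured by a polynomial of degree $\leq 4$; but for genus~$1$ the requirement $d-2$ grows with the degree and cannot be met for free, so one must exploit an existing radical singularity to drop the count to the $d-2$ simple residual intersections cut out on $\ccs$ by a single auxiliary line. Once that conjugate family is in hand, the remainder of the argument is a direct application of Algorithm~\ref{alg curves g<=3} together with the conjugate-family results of \cite{libro}.
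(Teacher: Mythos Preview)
Your proposal is correct and follows essentially the same route as the paper: intersect $\ccs$ with a line through the double radical point $P$ to obtain a conjugate family of $d-2$ residual simple points, then invoke Lemma~3.19 of \cite{libro} so that the one-dimensional subsystem $\A_{d-2}^{*}$ is defined over the radical base field, and finish with Algorithm~\ref{alg curves g<=3}. If anything, your bookkeeping with $\E$ is slightly more careful than the paper's (which speaks of a line ``defined over $\F$'' and a system over $\F(s)$, strictly accurate only when $P$ is already $\F(s)$-rational), but the argument is the same.
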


\begin{example}\label{ex-5}
We consider the surface $\ss$ (see Fig. \ref{fig-4}) defined by
\[ \begin{array}{lcl}
 F(x,y,z) & = & {y}^{2}+32\,{y}^{2}z+80\,{y}^{2}{z}^{2}+{y}^{3}-\frac{5}{2}\,xy-55\,xyz-132\,xy{z}^{2}-\frac{9}{2}\,x{y}^{2} \\
   & & -22\,x{y}^{2}z-36\,x{y}^{2}{z}^{2}+zx{y}^{3}+\frac{3}{2}\,{x}^{2}+24\,{x}^{2}z+54\,{x}^{2}{z}^{2}+\frac{13}{2}\,{x}^{2}y \\
   & & +32\,{x}^{2}yz+42\,{x}^{2}y{z}^{2}-3\,{x}^{3}-12\,{x}^{3}z-9\,{x}^{3}{z}^{2}+{z}^{2}{x}^{3}{y}^{2}.
\end{array} \]
$\ss$ has degree 7, but $\deg_{z}(F)=2$ and hence one can parametrize using case 1 in Section \ref{sec-radicals}.
\begin{figure}
  \centering
    \includegraphics[width=5cm]{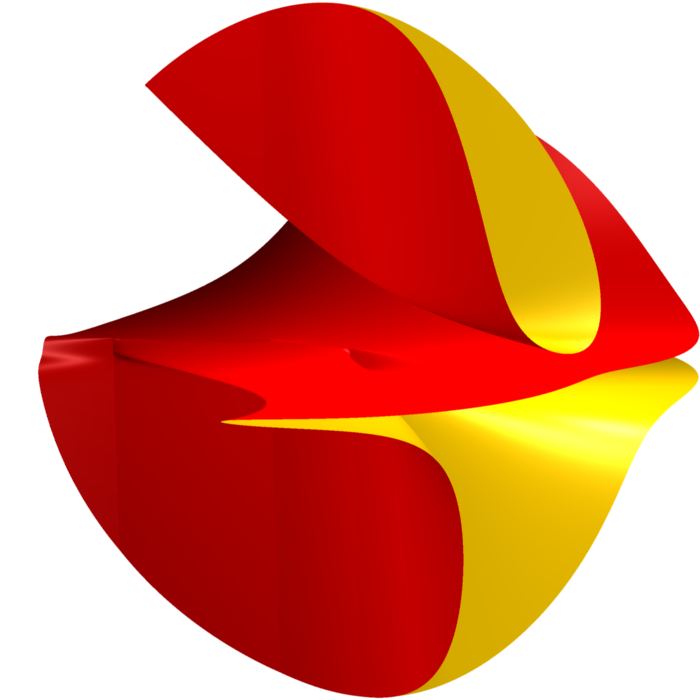}
  \caption{Surface in Example \ref{ex-5}}
  \label{fig-4}
\end{figure}
Nevertheless, we observe that $\ss$ contains the pencil of curves $\ccs$ defined by
\[
 g(x,y)=F(x,y,s)\in \F(s)[x,y].
\]
Note that $\ccs$, as a curve over the algebraic closure of $\F(s)$, has degree 5. Its singular (projective) locus is
\[
 \{(0: 1: 0), (1: 0: 0), (0: 0: 1), (1: 1: 1), (2: 3: 1)\}
 \]
where all points are double. So the genus of $\ccs$ is 1. Moreover it has rational double points. Thus, Theorem \ref{theorem genus 1} is applicable. So we apply Algorithm \ref{alg curves g<=3} taking, in Step 2, $d-2=5$ points on a family of conjugate points. First we consider the linear system of adjoints of degree $d-2=5$. It is defined by
\[ \begin{array}{lcl}
 H(x,y,w) & = & \left( -7\,a_{{0,2}}-4\,a_{{1,1}}-16\,a_{{1,2}}-2\,a_{{2,0}}-10\,a_{{2,1}} \right) y{w}^{2}+a_{{0,2}}{y}^{2}w+ \\
   & & \left( 6\,a_{{0,2}}+3\,a_{{1,1}}+15\,a_{{1,2}}+a_{{2,0}}+9\,a_{{2,1}} \right) x{w}^{2}+a_{{1,1}}xyw \\
   & & +a_{{1,2}}x{y}^{2}+a_{{2,0}}{x}^{2}w+a_{{2,1}}{x}^{2}y.
\end{array} \]
Now, cutting with the line $x+y=0$, we get the family of 5 conjugate points
\[
 \mathcal{F}=\{( \alpha:-\alpha: 1)\,|\,m(\alpha)=0\}
\]
where $m\in \F(s)[X]$ is
\[
 m(X)=5+111s+266s^2-15 X-66Xs-87Xs^2-X^2s+X^3s^2.
\]
Then, requiring the remainder of $H(X,-X,1)$ with respect to $m(X)$ to be zero, we get the conditions
\[
\begin{array}{r@{\ }lll}
  \{ & a_{0, 2} = a_{0, 2}, & a_{1, 1} = -5 a_{1, 2}-4 a_{0, 2}, & a_{1, 2} = a_{1, 2}, \\
     & a_{2, 0} = 10a_{1, 2}+4a_{0, 2}, & a_{2, 1}= -2a_{1, 2} \ \}
\end{array}\]
which provide the following 1-dimensional linear subsystem of adjoints:
\[ \begin{array}{lcl}
 H^*(x,y,w) &= &  \left( 1+4\,t \right) y{w}^{2}+{y}^{2}w+ \left( -2-8\,t \right) x{w}^{2}+ \left( -5\,t-4 \right) xyw \\
   & &+tx{y}^{2}+ \left( 10\,t+4 \right) {x}^{2}w-2\,t{x}^{2}y
\end{array} \]
Now, we compute the primitive part with respect to $t$ of the resultant of $g$ and $H(x,y,1)$ w.r.t. $y$ and $x$ respectively to get
\[ \begin{array}{ccl}
 S_1(t,s,x) & = & 16\,{s}^{2}{t}^{3}{x}^{2}+320\,{s}^{2}{t}^{3}x+20\,{s}^{2}{t}^{2}{x}^2-12\,s{x}^{2}{t}^{3}+176\,{s}^{2}{t}^{2}x \\
    & & +4\,{x}^{2}{s}^{2}t+112\,s{t}^{3}x-3\,{t}^{3}{x}^{2}+320\,{t}^{2}{s}^{2}+40\,{s}^{2}tx+28\,s{t}^{2}x \\
    & & +16\,{t}^{3}x+160\,t{s}^{2}+4\,x{s}^{2}+128\,{t}^{2}s-16\,{t}^{3}+4\,{t}^{2}x+20\,{s}^{2} \\
    & & +64\,ts-8\,{t}^{2}+8\,s-t \\
 S_2(t,s,y) & = & -160\,{s}^{2}{t}^{3}y+832\,{t}^{3}{s}^{2}-72\,{t}^{2}{s}^{2}y-56\,s{t}^3y+2\,{y}^{2}{t}^{2}s-2\,{y}^{2}{t}^{3} \\
   & &+768\,{t}^{2}{s}^{2}+256\,{t}^{3}s-46\,{t}^{2}sy+12\,{t}^{3}y+228\,t{s}^{2}+2\,y{s}^{2} \\
   & & +256\,{t}^{2}s-8\,yts-16\,{t}^{3}+3\,y{t}^{2}+22\,{s}^{2}+80\,ts-8\,{t}^{2}+8\,s-t
\end{array} \]
By theory $\deg_{x}(R_1),\deg_{y}(R_2)\leq 4$; indeed the degree is 2 in both cases. Now, computing by radicals the roots of $R_1$ and $R_2$ as polynomials in $x$ and $y$ respectively, we get the radical parametrization of $\ss$:
\[
 \left( {\frac { \left( -2\,{t}^{2}-40\,{t}^{2}{s}^{2}-14\,{t}^{2}s-12\,t{s}^{2}-2\,{s}^{2}+\sqrt {\Delta} \right)  \left( 1+4\,t \right) }{t \left( -12\,{t}^{2}s+4\,{s}^{2}+20\,t{s}^{2}-3\,{t}^{2}+16\,{t}^{2}{s}^{2} \right) }},\right.
\]
\[
 \left. {\frac { \left( -3\,{t}^{2}+40\,{t}^{2}{s}^{2}+14\,{t}^{2}s+8\,ts+8\,t{s}^{2}-2\,{s}^{2}+\sqrt {\Delta} \right) \left( 1+4\,t \right) }{4{t}^{2} \left( s-t \right) }},s\right)
\]
where
\[ \begin{array}{ccl}
 \Delta(t,s) & = & 4\,{s}^{4}+24\,{t}^{3}s+224\,{t}^{3}{s}^{2}+12\,{t}^{2}{s}^{2}+44\,s{t}^{4}+{t}^{4}+372\,{t}^{4}{s}^{2} \\
   & & +1600\,{t}^{4}{s}^{4}+1120\,{t}^{4}{s}^{3}+448\,{t}^{3}{s}^{3}+640\,{t}^{3}{s}^{4}-96\,{t}^{2}{s}^{4} \\
   & & -104\,{t}^{2}{s}^{3}-32\,t{s}^{3}-32\,t{s}^{4}.
\end{array} \]
\end{example}

Theorem \ref{theorem genus 1} can be generalized as follows. If the pencil of genus 1 curves has an $r$-fold radical point and $r-2$ simple radical points (note that when $r=2$ we are in the situation of Theorem \ref{theorem genus 1}), we can proceed as follows. In Step 2 of Algorithm \ref{alg curves g<=3} we need $d-2$ simple points. Then, intersecting $\ccs$ with a line defined over $\F$ and passing through the $r$-fold point, one can take a family of $d-r$ $\F(s)$-conjugate points. Now, using the $r-2$ simple rational points we get a 1-dimensional subsystem of adjoints $\A_{d-2}^*$, definable over $\F(s)$, that parametrizes the pencil by radicals. Therefore, we get the following theorem.

\begin{theorem}[Case of genus 1; second part]\label{theorem genus 1b}
If $\ss$ has a genus 1 pencil of curves with an $r$-fold radical point and $r-2$ radical simple points, then $\ss$ is radical.
\end{theorem}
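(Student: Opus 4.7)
The plan is to mimic the proof of Theorem \ref{theorem genus 1} (first part), exploiting the $r$-fold radical point to absorb intersections with an auxiliary line, and then using the $r-2$ additional simple radical points to make up the total of $d-2$ simple points that Algorithm \ref{alg curves g<=3} requires in Step 2 for a pencil of genus $1$.

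First, I would let $P$ be the $r$-fold radical point of $\ccs$ and pick a line $\ell$ defined over $\F$ passing through $P$. By B\'ezout the intersection $\ell\cap \overline{\ccs}$ consists of $d$ points counted with multiplicity; $P$ accounts for $r$ of them, leaving a residual set of $d-r$ points whose coordinates are the roots of a polynomial with coefficients in the radical extension $\F(s)(P)$. These residual points form an $\F(s)(P)$-conjugate family in the sense of Def.~3.15 in \cite{libro}, so (by Lemma~3.19 in \cite{libro}) they collectively impose conditions on the adjoint system $\A_{d-2}(\overline{\ccs})$ whose defining equations live in $\F(s)(P)$, which is itself a radical extension of $\F(s)$. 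For generic $\ell$ these $d-r$ points are distinct and simple on $\ccs$.

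Next, I would adjoin the $r-2$ given simple radical points. Each one imposes a further linear condition on $\A_{d-2}(\overline{\ccs})$ with coefficients in a radical extension of $\F(s)$. Since $(d-r)+(r-2)=d-2$, the total count matches exactly what Step 2 of Algorithm \ref{alg curves g<=3} demands. Using the standard dimension count $\dim \A_{d-2}(\overline{\ccs})=d-2+\gg=d-1$, the resulting subsystem $\A_{d-2}^*$ has dimension $1$ generically and is defined over a radical extension $\E_0$ of $\F(s)$. Feeding this into Steps 4 and 5 of Algorithm \ref{alg curves g<=3} produces the primitive parts $S_1(x,s,t)$ and $S_2(y,s,t)$ of two resultants; by the genus-$1$ theory they have degree at most $4$ in $x$ and $y$ respectively, and their coefficients lie in $\E_0[s,t]$. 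Solvability by radicals in degree $\leq 4$ then yields a radical parametrization of $\ccs$ over $\E_0$, and hence a radical parametrization of $\ss$ over $\F$.

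The main obstacle I anticipate is checking that the mixture of one $\F(s)(P)$-conjugate family together with $r-2$ individually radical simple points still cuts out a subsystem of the expected dimension $1$ and that the whole construction remains inside a single root tower over $\F(s)$; the latter is immediate from the fact that a radical extension of a radical extension is again radical, while the former follows from the genericity of the line $\ell$ and the general position of the $r-2$ simple points. Once these are settled, the argument mirrors Theorem \ref{theorem genus 1} step by step, and the special case $r=2$ recovers that earlier result.
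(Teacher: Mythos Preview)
Your proposal is correct and follows essentially the same route as the paper: intersect $\ccs$ with a line through the $r$-fold radical point to obtain a conjugate family of $d-r$ simple points, supplement these with the $r-2$ given radical simple points to reach the required $d-2$ conditions in Step~2 of Algorithm~\ref{alg curves g<=3}, and then run the remaining steps over the resulting radical extension. You are in fact slightly more careful than the paper in noting that the adjoint subsystem lives over a radical extension $\E_0$ of $\F(s)$ rather than over $\F(s)$ itself (the paper's phrase ``definable over $\F(s)$'' tacitly assumes the points are rational), and in flagging the genericity needed for the dimension count; otherwise the arguments coincide.
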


There is still another situation where we can solve by radicals a surface that has a pencil of genus 1 curves, albeit in a different fashion (see for example \cite[Prop. IV.4.6]{Hartshorne}). Let $P$ be a radical regular point of $\ccs$, i.e. whose coordinates lie in a root tower over $\F(s)$. Then, by the Riemann-Roch theorem \cite[Th. IV.1.3]{Hartshorne} we have
\[
 \dim(\cL(nP))=n \,, \quad n\geq1
\]
where $\cL(nP)$ is the vector space of rational functions defined on the curve such that the only possible pole is $P$ with order at most $n$. The computation of bases for several of these spaces (see \cite{Hess}) will provide a low degree equation:
\begin{itemize}
 \item $\cL(2P)=\langle 1,f\rangle$ where $f$ has a double pole at $P$,
 \item $\cL(3P)=\langle 1,f,g\rangle$ where $g$ has a triple pole at $P$.
\end{itemize}
Now, we have $\cL(6P)=\langle 1,f,g,f^2,fg,g^2,f^3\rangle$ where the last two elements have pole order exactly 6 at $P$. Since we have 7 elements in a vector space of dimension 6, there exists a nontrivial linear relation between these functions. It follows that the map $Q\mapsto(f(Q),g(Q))$ is birrationally equivalent to its image, whose equation is precisely the linear relation just mentioned. In this way we reduce $\ccs$ to a cubic curve, and hence to a curve parametrizable by radicals; see \cite{SeSevilla}.

\begin{theorem}[Case of genus 1; third part]\label{theorem genus 1c}
If $\ss$ has a genus 1 pencil of curves with a regular radical point, then $\ss$ is radical.
\end{theorem}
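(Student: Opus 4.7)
The plan is to follow the Riemann--Roch reduction sketched in the paragraphs preceding the statement, while tracking carefully that every field extension involved lies inside a root tower over $\F(s)$. Let $P$ be a regular radical point of $\ccs$, so that the coordinates of $P$ live in a field $\E$ sitting at the top of such a tower.

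The first step is to compute bases of the Riemann--Roch spaces $\cL(2P)$ and $\cL(3P)$, for instance by running Hess's algorithm \cite{Hess} with ground field $\E$. This yields functions $f\in\cL(2P)$ and $g\in\cL(3P)$ with pole orders exactly $2$ and $3$ at $P$ and with coefficients in $\E$. The seven functions $1,f,g,f^{2},fg,g^{2},f^{3}$ all lie in the six-dimensional space $\cL(6P)$, so a single linear-algebra computation over $\E$ produces the nontrivial relation mentioned in the text; the coefficients of $f^{3}$ and $g^{2}$ are forced to be nonzero by inspection of pole orders at $P$, and after rescaling one reads off a Weierstrass-type cubic $W(u,v)\in\E[u,v]$ which is of degree $2$ in $v$.

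Next I would check that the rational map $\Phi\colon \ccs \to \{W=0\}$ sending $Q$ to $(f(Q),g(Q))$ is birational; this is the standard pole-order argument from \cite[IV.4]{Hartshorne}, valid over any ground field of characteristic zero. The inverse $\Phi^{-1}$ is then a rational map with coefficients in $\E$, computable from the same Riemann--Roch data. Because $W$ has degree $2$ in $v$, the cubic is radically parametrizable: set $u=t$ and solve $W(t,v)=0$ for $v$ by the quadratic formula, picking up a single square root over $\E(t)$. Composing with $\Phi^{-1}$ yields a radical parametrization of the plane pencil $\ccs$. Lifting through the inverse of the birational projection $\pi$ used at the start of the section returns a space pencil inside $\ss$, and letting $s$ play the role of the second parameter produces a triple in $(s,t)$ whose components lie in a root tower over $\F(s,t)$. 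Since varying $t$ traces out an individual curve $\ccs$ and varying $s$ moves across distinct curves of the pencil, the Jacobian has generic rank $2$, and we obtain a radical parametrization of $\ss$ in the sense of Definition \ref{def-rad-param}.

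The step I expect to be the main obstacle is not conceptual but algorithmic: both Hess's basis computation and the subsequent linear algebra must be performed over the potentially deeply nested radical extension $\E\supset\F(s)$, so an honest implementation has to carry arithmetic in such a tower together with $s$ as an additional transcendental. A secondary point that also deserves attention is ensuring that the particular radical regular point $P$ chosen is not a base point of the linear systems involved nor a point at which $\Phi$ fails to be separating; a generic choice of $P$ avoids these defects, but one must verify that at least one radical regular point exists for which the construction goes through, which is where the hypothesis of the theorem is genuinely used.
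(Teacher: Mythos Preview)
Your proposal is correct and follows essentially the same approach as the paper: reduce $\ccs$ via the Riemann--Roch spaces $\cL(2P)$, $\cL(3P)$, $\cL(6P)$ to a Weierstrass cubic and then parametrize that cubic by radicals, tracking that all computations stay within a root tower over $\F(s)$. Your secondary concern about needing a ``generic'' $P$ is unnecessary---the construction goes through for \emph{every} regular point of a genus~$1$ curve---but this overcaution does not affect the validity of the argument.
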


\subsection{Genus 4}

In Step 2 of Algorithm \ref{alg curves 2<=g<=4}, two simple points are required. Therefore, the following theorem holds.

\begin{theorem}[Case of genus 4; first part]\label{theorem genus 4}
If $\ss$ has a genus 4 pencil of curves with two simple radical points, then $\ss$ is radical.
\end{theorem}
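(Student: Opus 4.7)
The plan is to invoke Algorithm \ref{alg curves 2<=g<=4} on the pencil $\ccs$, viewed as a plane curve of genus $4$ over the algebraic closure of $\F(s)$. That algorithm demands in Step 2 exactly $\gg-2=2$ simple points, which is precisely what the hypothesis supplies: two simple radical points $P_1,P_2\in\ccs$. First I would let $\E_0$ denote a root tower over $\F(s)$ containing the coordinates of both $P_1$ and $P_2$; such an $\E_0$ is obtained by concatenating the individual root towers of $P_1$ and $P_2$, since a compositum of radical towers is again a radical tower (append the radical generators of the second tower one by one on top of the first).

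With $P_1,P_2\in\E_0^2$ in hand, I would execute Step 3 of the algorithm. The subsystem $\A_{d-3}^*$ is cut out of $\A_{d-3}$ by two linear conditions, namely vanishing at $P_1$ and at $P_2$, and these conditions have coefficients in $\E_0$; hence the defining polynomial $H^*(x,y,w,t)$ may be chosen in $\E_0[x,y,w,t]$. Then the resultants
\[
 S_1(x,w,t)=\pp_{t}(\res_{y}(F,H^*)), \quad S_2(y,w,t)=\pp_{t}(\res_{x}(F,H^*))
\]
lie in $\E_0[x,w,t]$ and $\E_0[y,w,t]$ respectively, and by the theory behind Algorithm \ref{alg curves 2<=g<=4} each has degree at most $4$ in its main variable. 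Solving them by radicals enlarges $\E_0(t)$ to a root tower $\E$ over $\F(s,t)$, within which Step 5 assembles a radical parametrization $(R_1(s,t),R_2(s,t))$ of $\ccs$ by combining the roots of $S_1$ and $S_2$ as in \cite{SeSevilla}.

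Since $\ccs$ sweeps out $\ss$ as $s$ varies, the triple
\[
 \cP(s,t)=(R_1(s,t),R_2(s,t),s)
\]
is a radical parametrization of $\ss$ over $\K=\F(s,t)$, and its Jacobian has rank $2$ because on a generic curve of the pencil the map $t\mapsto(R_1,R_2)$ is dominant onto $\ccs$ while the third coordinate varies nontrivially with $s$.

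The main (minor) obstacle is precisely the bookkeeping of the field of definition: one has to check that the entire construction remains inside a single root tower over $\F(s)$, which is exactly what the \emph{radicality} hypothesis on $P_1,P_2$ buys. Without that assumption the subsystem $\A_{d-3}^*$ produced by Step 3 would only be defined over a generic algebraic (non-radical) extension of $\F(s)$, and the final parametrization would fail to meet Definition \ref{def-rad-param}.
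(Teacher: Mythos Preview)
Your proof is correct and follows exactly the approach of the paper: apply Algorithm~\ref{alg curves 2<=g<=4} to the pencil $\ccs$, using the two hypothesized radical simple points for Step~2. The paper's own justification is a single sentence to this effect; your version merely spells out the field-bookkeeping (that the tower containing $P_1,P_2$ is radical and that the subsequent resultant/root steps stay within a root tower over $\F(s,t)$), which is implicit in the paper.
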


\begin{example}\label{ex-6}
We consider the surface $\ss$ (see Fig. \ref{fig-5}) defined by
\[
 F(x,y,z) = -1+y^3-3x-xz^7-xz+xy^2+xy^3+x^3y^2.
\]
$\ss$ has degree 8, but $\deg_{x}(f)=3$ and hence one can parametrize using case 1 in Section \ref{sec-radicals}.
\begin{figure}
  \centering
    \includegraphics[width=5cm]{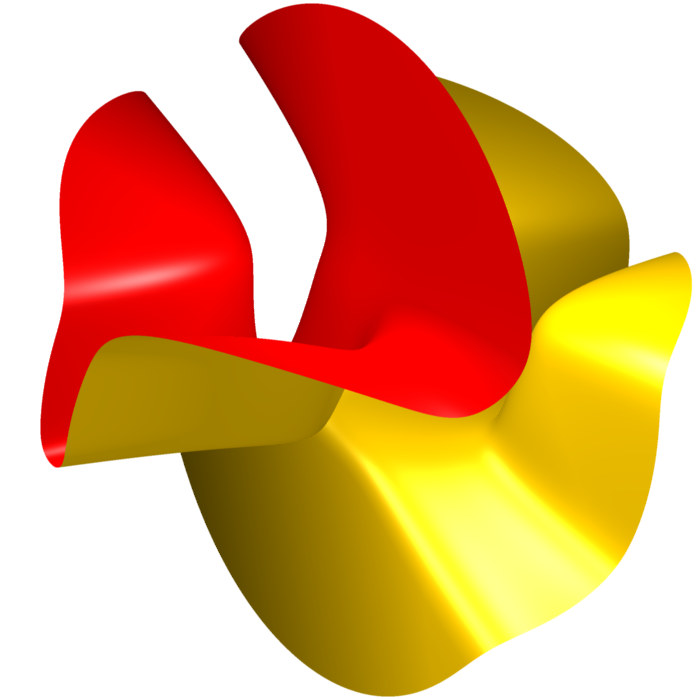}
  \caption{Surface in Example \ref{ex-6}}
  \label{fig-5}
\end{figure}
Nevertheless, we observe that $\ss$ contains the pencil of curves $\ccs$ defined by
\[
 g(x,y)=F(x,y,s)\in \F(s)[x,y].
\]
As a curve over the algebraic closure of $\F(s)$, $\ccs$ has degree 5. Its singular (projective) locus is
\[
 \{(0: 1: 0), (1: 0: 0) \}
\]
where all points are double. So the genus of $\ccs$ is 4. Moreover $\ccs$ has the following two radical simple points:
\[
 \{ (\sqrt {{s}^{7}+s+1}:1:1),(-\sqrt {{s}^{7}+s+1}:1:1) \}.
\]
We apply Algorithm \ref{alg curves 2<=g<=4} to $\ccs$. In Step 1 we get the $d-3=2$ adjoints. They are defined by
\[
 a_{{0,0}}{w}^{2}+a_{{0,1}}yw+a_{{1,0}}xw+a_{{1,1}}xy.
\]
Now, we consider the linear subsystem obtained by forcing the adjoints to pass through the radical simple points. We get
\[
 H^*(x,y,w)=-{w}^{2}+yw-txw+txy.
\]
In this situation, the theory ensures that the intersection of $H^*$ and $\overline{\ccs}$ leaves, at most, 4 intersection points different to the singularities and to the 2 chosen simple radical points. Computing these intersection points we reach the radical parametrization. For this purpose, we determine the primitive part with respect to $t$ of the resultant of $H^*(x,y,1,t)$ and $F(x,y,s)$ w.r.t. $y$ and $x$ respectively, to get
\[ \begin{array}{lcl}
 S_1(x,t) & = & - \left( 1+tx \right)^{3} \\
 S_2(y,t) & = & -{t}^{3}+{t}^{3}{y}^{3}+3\,{t}^{2}+{t}^{2}{s}^{7}+{t}^{2}s-{t}^{2}{y}^{2}-{t}^{2}{y}^{3}-{y}^{2}
\end{array} \]
Solving by radicals we get the radical parametrization
\[
 \left( \frac{-1}{t},\frac{1}{6}\,{\frac {{\Gamma}^{2/3}+4\,{t}^{4}+8\,{t}^{2}+4+2\,\sqrt [3]{\Gamma}{t}^{2}+2\,\sqrt [3]{\Gamma}}{{t}^{2} \left( t-1 \right) \sqrt [3]{\Gamma}}},s \right)
\]
where
\[ \begin{array}{lcl}
 \Delta & = & -12+4\,t-4\,s+12\,{t}^{3}-4\,{s}^{7}-36\,{t}^{2}-36\,{t}^{4}+231\,{t}^{6}+270\,{t}^{8}{s}^{7}\\ & &-378\,{t}^{7}{s}^{7}+158\,{t}^{6}{s}^{7}+270\,{t}^{8}s-378\,{t}^{7}s+158\,{t}^{6}s-54\,{t}^{9}{s}^{7}-54\,{t}^{9}s \\ & &+27\,{t}^{8}{s}^{14}+54\,{t}^{8}{s}^{8}-12\,{t}^{4}{s}^{7}-54\,{t}^{7}{s}^{14}-108\,{t}^{7}{s}^{8}+27\,{t}^{6}{s}^{14}+54\,{t}^{6}{s}^{8} \\
   & &+27\,{t}^{8}{s}^{2}-12\,{t}^{4}s-54\,{t}^{7}{s}^{2} +27\,{t}^{6}{s}^{2}-216\,{t}^{9}+594\,{t}^{8}-644\,{t}^{7}+27\,{t}^{10} \\
   & &+12\,{t}^{5}-12\,{t}^{2}{s}^{7}-12\,{t}^{2}s \\
 \Gamma & = & 108\,{t}^{9}-540\,{t}^{8}+756\,{t}^{7}-316\,{t}^{6}-108\,{t}^{8}{s}^{7}+216\,{t}^{7}{s}^{7}-108\,{t}^{6}{s}^{7} \\
   & &-108\,{t}^{8}s+216\,{t}^{7}s-108\,{t}^{6}s+24\,{t}^{4}+24\,{t}^{2}+8+12\,\sqrt {3}\sqrt {\Delta}{t}^{4} \\
   & & -12\,\sqrt {3}\sqrt {\Delta}{t}^{3}
\end{array} \]
\end{example}

In addition to the results above, with a different approach (also based on adjoints) we can solve the non-hyperelliptic genus 4 case without extra conditions. It is known by Brill-Noether theory (see \cite[Chapter V]{ACGH} for example) that genus 4 curves are trigonal (they admit a $3:1$ map to $\pj^1$). A general characterization of trigonality was given by Enriques, Petri and Babbage (see \cite{Saintdonat} for a modern account), here we develop the relatively simple case of genus 4. See the end of the section for comments on the hyperelliptic case.


The image of a non-hyperelliptic genus 4 curve by its canonical embedding (which is given by a basis of the adjoint space) is a degree 6 curve in $\pj^3$. By the aforementioned result, the intersection of all the quadric hypersurfaces containing the canonical image is a scroll, and any of its rulings determines a $3:1$ map; but the ambient dimension is 3, so the scroll itself is the only quadric hypersurface. Since the canonical embedding has as its coordinates a basis of the space of adjoints \cite[Chapter VI, \S 6]{walker},
\[
 \phi_C\colon C\to\pj^3 \,,\quad \phi_C(p)=(a_0(p):a_1(p):a_2(p):a_3(p)),
\]
the scroll can be calculated explicitly by finding a homogeneous degree 2 relation
\[
 F(x_0:x_1:x_2:x_3) \,, \deg(F)=2 \quad\mbox{such that}\quad F(a_0:a_1:a_2:a_3)=0.
\]
It remains to calculate a ruling. We have two cases, which we discern by computing the singularities of the scroll, i.e. solving a linear system of four equations in four variables. Either it has one singularity, and the coordinates of the singularity are rational in the coefficients of the linear system (thus of the adjoints), or it is regular.

\begin{itemize}
 \item If the scroll is singular point, it is a cone. The lines that provide its ruling can be calculated by projecting the cone from its vertex (we obtain a plane conic), and then joining the vertex to each point of the conic. Finding a point and the parametrization of the conic involves, in the worst case, introducing one square root.
 \item If the scroll is not singular, we calculate one point in it (this can be done, at worst, with one extra square root) and its tangent space. The intersection of both is a reducible conic, thus a pair of lines. Let $L$ be one of them. The projection from $L$ provides a ruling: each plane containing $L$ cuts the scroll in a union of $L$ and another line.
\end{itemize}

Therefore, we have the following result.

\begin{theorem}[Case of genus 4; second part]\label{theorem genus 4;second part}
If $\ss$ has a pencil of genus 4 non-hyperelliptic curves, then $\ss$ is radical.
\end{theorem}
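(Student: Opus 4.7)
The plan is to carry out explicitly, over the field $\F(s)$, the construction sketched in the paragraphs preceding the statement, checking at each step that every ingredient lives in a root tower over $\F(s)$ so that the final parametrization is genuinely radical.

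First, viewing $\ccs$ as a single curve over $\F(s)$, I would compute a basis $(a_0,a_1,a_2,a_3)$ of the adjoint space $\jadj$ by linear algebra; since $\gg=4$ this space has dimension $4$, and the associated canonical embedding $\phi\colon\ccs\to\pj^3$ maps $\ccs$ birationally onto a curve of degree $2\gg-2=6$ (the non-hyperelliptic hypothesis is precisely what ensures that $\phi$ is an embedding rather than a $2{:}1$ cover of a rational normal cubic). Next I would determine the unique (up to scalars) quadratic form $F$ in four variables satisfying $F(a_0,a_1,a_2,a_3)\equiv 0$ on $\ccs$; this is again linear algebra over $\F(s)$ and produces the defining equation of the scroll $\Sigma\subset\pj^3$.

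Then I would split into the two cases discussed in the excerpt according to whether $\Sigma$ is singular, which is decided by solving the linear system $\nabla F=0$ over $\F(s)$ without any field extension. In the cone case the vertex has coordinates in $\F(s)$; projecting from it gives a plane conic, and parametrizing this conic costs at most one square root. In the smooth case, exhibiting a point of $\Sigma$ requires at worst one square root (fix three homogeneous coordinates generically and solve the remaining univariate quadratic), after which the tangent plane cuts $\Sigma$ in a reducible conic whose factorization into a pair of lines again introduces at most one square root; keeping either component yields a line $L$ from a ruling of $\Sigma$. In either case, the rulings of $\Sigma$ are parametrized by a parameter $t$ lying in a field $\E_0$ obtained from $\F(s)$ by at most two quadratic extensions.

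Finally, for a generic $t$ the corresponding line in $\Sigma$ meets $\phi(\ccs)$ in exactly three points; substituting the line's parametric equations into the ideal of $\phi(\ccs)$ and eliminating all but one variable yields a single univariate polynomial $P(u;t)\in\E_0(t)[u]$ of degree $3$. Cardano's formulas deliver the three roots by radicals over $\E_0(t)$, and pulling them back through the rational inverse of $\phi$ produces a radical expression for $(x,y)$ in terms of $(s,t)$; appending $z=s$ and undoing the projection $\pi$ introduced at the start of the section yields a radical parametrization of $\ss$. The step I expect to be the main obstacle is the verification that $P$ truly has degree $3$ in $u$ and that $\phi^{-1}$ is rational on a Zariski-dense subset of the image—that is, that the construction really furnishes a $3{:}1$ map; but this is forced by the classical Enriques--Petri--Babbage description of the canonical model and the birationality of the canonical embedding in the non-hyperelliptic case, so no new ingredient is required.
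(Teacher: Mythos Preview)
Your proposal is correct and follows essentially the same route as the paper: compute the canonical embedding via a basis of adjoints, find the unique quadric (the scroll) containing the image, extract a ruling in the cone/smooth cases at the cost of at most a couple of square roots, and then use the resulting $3{:}1$ map to reduce to a cubic solvable by Cardano. The only minor differences are that you are explicit about the possible extra square root when factoring the tangent-plane conic and about inverting $\phi$, whereas the paper leaves these implicit and in its example composes the ruling map directly with the adjoint map on $\ccs$ before taking resultants.
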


\begin{example}\label{ex-trigonal}
We consider the surface $\ss$ (see Fig. \ref{fig-trigonal}) defined by
\[ \begin{array}{lcl}
 F(x,y,z) & = & 2\,{z}^{5}-y{z}^{4}+2\,{y}^{2}{z}^{3}+14\,{y}^{3}{z}^{2}+8\,{y}^{4}z-{y}^{5}+xy{z}^{3}+x{y}^{2}{z}^{2} \\
   & & +2\,x{y}^{3}z+2\,x{y}^{4}-2\,{x}^{2}y{z}^{2}-9\,{x}^{2}{y}^{2}z+2\,{x}^{2}{y}^{3}-{x}^{3}{z}^{2} \\
   & & -{x}^{3}{y}^{2}-{x}^{4}z-7\,{x}^{4}y+5\,{x}^{5}.
\end{array} \]
\begin{figure}
  \centering
    \includegraphics[width=5cm]{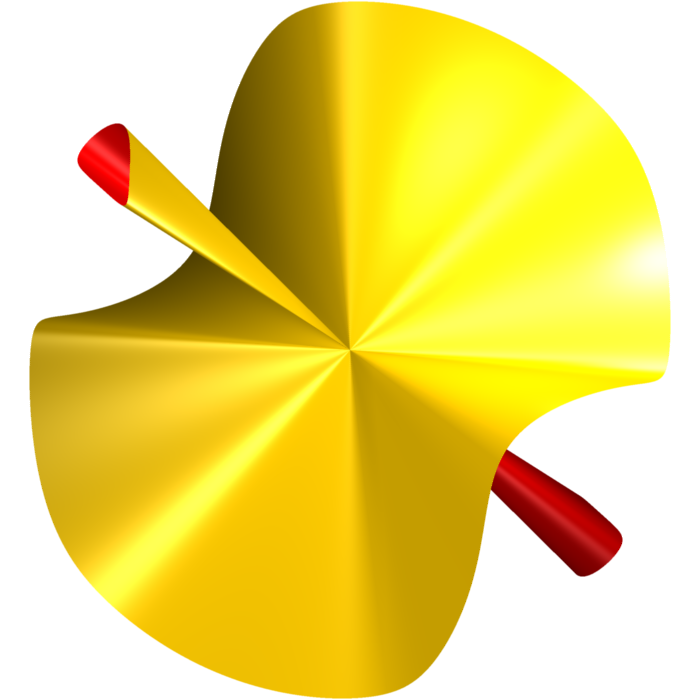}
  \caption{Surface in Example \ref{ex-trigonal}}
  \label{fig-trigonal}
\end{figure}
We observe that $\ss$ contains the pencil of curves $\ccs$ defined by
\[
 g(x,y)=F(x,y,s)\in \F(s)[x,y].
\]
As a curve over the algebraic closure of $\F(s)$, $\ccs$ has degree 5. Its singular (projective) locus is
\[
 \{(0: -s: 1), (s: 0: 1) \}
\]
where all points are double. So the genus of $\ccs$ is 4, thus it is trigonal and we can apply the procedure above (but note that the pencil of curves through one of the singular points would already produce a $3:1$ map). Its adjoints are quadrics passing through the two points above, and a basis is
\[
 \langle y^2+ys, y^2+x^2-s^2, xy, x^2-xs \rangle.
\]
A nontrivial quadratic relation between these four generators is given by the polynomial in $u_0,u_1,u_2,u_3$
\[
 u_0u_2+u_0u_3-u_1u_2+u_2u_3-u_2^2.
\]
It is immediate to check that the only common zero of its partial derivatives is the origin, therefore it is a nonsingular quadric surface in $\pj^3$. To produce a ruling, we choose a point in it, say $(0:1:0:0)$. The tangent plane at this point is $u_2=0$. Setting this in the equation of the surface we obtain
\[
 u_0u_3=0
\]
producing the two lines $u_2=u_3=0$ and $u_2=u_0=0$. We choose the first line as our $L$. The planes containing $L$ are of the form $\alpha u_2+\beta u_3=0$ with $(\alpha:\beta)\in\pj^1$. Each one of them intersects the surface in
\[
 L \quad\cup\quad \{\alpha u_2+\beta u_3=0 \,,\, (\beta-\alpha)u_0-\beta u_1-(\alpha+\beta)u_2=0\}.
\]
Then, the ruling is given by
\[
 (u_0:u_1:u_2:u_3)\mapsto(\alpha:\beta)=(u_0-u_1-u_2:u_0+u_2).
\]
Composing this with the embedding by adjoints we obtain $\ccs\to\pj^1$ given by
\[
 (x,y)\mapsto(ys+s^2-x^2-xy:ys+y^2+xy)
\]
We know that this map is $3:1$. Indeed, if we define
\[
 H(x,y,s,t)=(ys+s^2-x^2-xy)-t(ys+y^2+xy)
\]
then the primitive parts with respect to $t$ of the resultants of $F(x,y,s)$ and $H(x,y,s,t)$ w.r.t. $y$ and $x$ are
\[ \begin{array}{lcl}
 S_1(x,s,t) & = & \left( -{t}^{2}+{t}^{3}-8\,{t}^{5}-1+3\,{t}^{4}-2\,t \right) {y}^{3} \\
  & & +\left( -11\,s{t}^{3}+30\,s{t}^{4}+10\,s-s{t}^{2} \right) {y}^{2} \\
  & & +\left( 13\,{s}^{2}{t}^{2}-40\,{s}^{2}{t}^{3}+15\,{s}^{2}-2\,{s}^{2}t
 \right) y+4\,{s}^{3}+20\,{s}^{3}{t}^{2}-6\,{s}^{3}t \\
 S_2(y,s,t) & = & \left( -{t}^{2}+{t}^{3}-8\,{t}^{5}-1+3\,{t}^{4}-2\,t \right) {x}^{3} \\
  & & +\left( 2\,s{t}^{4}-4\,st-2\,s{t}^{3}+3\,s{t}^{2}+3\,s-6\,s{t}^{5}
 \right) {x}^{2} \\
  & & +\left( 12\,{s}^{2}{t}^{2}-{s}^{2}{t}^{3}-4\,{s}^{2}{t}^{5}-3\,{s}^{2}+14\,{s}^{2}t \right) x \\
  & & +{s}^{3}{t}^{4}-8\,{s}^{3}t-2\,{s}^{3}{t}^{3}-14\,{s}^{3}{t}^{2}+{s}^{3}-2\,{t}^{5}{s}^{3}
\end{array} \]
from which we solve by radicals to calculate the parametrization.
\end{example}

We finish this section with the observation that for genus 4 in general the surface is radical, though we do not provide an algorithm for the hyperelliptic case. The idea is that every hyperelliptic curve can be transformed into one of the form $y^2=f(x)$ where $\deg(f)$ is $2\gg+2$ or $2\gg+1$; and there exist algorithms to detect whether a curve is hyperellipic and put it in that form, in such a way that at most a quadratic extension of $\F(s)$ is needed, thus we can parametrize such curves by radicals using Case 1 of Section \ref{sec-radicals}. Thus we have the following result.

\begin{theorem}[Case of genus 4; third part]\label{theorem genus 4;third part}
If $\ss$ has a pencil of genus 4 curves, then $\ss$ is radical.
\end{theorem}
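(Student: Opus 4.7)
The plan is to dispatch this in two mutually exclusive cases, reducing the new content to the hyperelliptic situation. First I would split: either every (or equivalently, a generic) curve $\ccs$ in the pencil is non-hyperelliptic, or it is hyperelliptic. In the non-hyperelliptic case, Theorem \ref{theorem genus 4;second part} already yields the conclusion, so no further work is needed; the only genuinely new content is the hyperelliptic case.

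For the hyperelliptic case, I would invoke the existing algorithmic toolkit for hyperelliptic curves over a function field (as indicated by the paragraph preceding the statement). Concretely, the plan is: apply a detection/reduction algorithm to $\ccs$, viewed as a curve over the algebraic closure of $\F(s)$, to produce a birational transformation $\Phi$ defined over at most a quadratic extension $\F(s)(\sqrt{\delta})$ of $\F(s)$ that sends $\ccs$ to a plane model of the form
\[
  y^2 = f(x),\qquad \deg(f)\in\{2\gg+1,2\gg+2\}=\{9,10\}.
\]
This model has partial degree $2$ in $y$, which is $\le 4$, so Case 1 of Section \ref{sec-radicals} applies and yields the radical parametrization $(t_1,\sqrt{f(t_1)})$ of the transformed curve over $\F(s)(\sqrt{\delta})$. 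Since $\sqrt{\delta}$ itself lies in a one-step root tower over $\F(s)$, adjoining this extension produces a radical parametrization of the transformed curve over $\F(s)$ in the sense of Definition \ref{def-rad-param}.

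Next, I would transport the parametrization back to $\ccs$ using $\Phi^{-1}$, which is rational with coefficients in $\F(s)(\sqrt{\delta})$; composition with a radical parametrization keeps us inside a root tower over $\F(s)$, so $\ccs$ is radical over $\F(s)$. Finally, reinterpreting $s$ as the second transcendental parameter $t_2$, the composed expressions furnish a triple $(R_1(s,t),R_2(s,t),R_3(s,t))\in\E^3$ with $F(R_1,R_2,R_3)=0$; the rank condition on the Jacobian is guaranteed because the pencil sweeps out $\ss$ (the map $(s,t)\mapsto$ point on $\ccs$ has two-dimensional image), exactly as in the genus $1$, $2$, and $3$ cases treated earlier in the section. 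This is the desired radical parametrization of $\ss$.

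The main obstacle I anticipate is the hyperelliptic detection and reduction step over $\F(s)$: we need that the algorithm genuinely runs with coefficients in $\F(s)$ and introduces at most one square root, so that the chain of extensions remains a root tower. This is asserted as known in the excerpt (and is classical for curves given by a plane model, via the canonical map and its image being a rational normal curve in the hyperelliptic case), so I would cite the standard references rather than reprove it. All other steps --- the reduction to $y^2=f(x)$, the application of Case 1, the use of birational invariance of radicality (the proposition following Definition \ref{def-rad-param}), and the passage from a pencil parametrization to a surface parametrization --- are routine given the machinery already built in the paper.
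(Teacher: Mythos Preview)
Your proposal is correct and follows essentially the same route as the paper: split into non-hyperelliptic versus hyperelliptic, invoke Theorem~\ref{theorem genus 4;second part} in the former case, and in the latter reduce (over at most a quadratic extension of $\F(s)$) to a model $y^2=f(x)$ and apply Case~1 of Section~\ref{sec-radicals}. You supply more detail than the paper (transporting back via $\Phi^{-1}$, the Jacobian rank check), but the underlying argument is the same.
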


\section{The Role of Radical Parametrizations in Geometric Constructions}\label{sec-offsets}

In many geometric constructions, as for instance offset or conchoidal curves/surfaces, one observes that even though one starts from a rational parametrization the generated object is not, in general, parametrizable by means of rational functions; see \cite{PP}, \cite{ASS}, \cite{ASS2} for offsets and \cite{PGS}, \cite{SeSe} for conchoids. However, if one starts from a radical parametrization the offset (similarly the conchoid) is radical. In other words, the class of radical curves/surfaces (i.e curves/surfaces having a radical parametrization) are invariant under offsetting and conchoidal constructions.

In this section, we state the above claim and we enlarge the family of potential geometric constructions that preserves the class of radical parametrizations. For this purpose, we see the geometric construction from the perspective of algebraic geometry by considering varieties of incidence. To clarify ideas, let us start with two examples (offsets and conchoids) and afterwards we present the more general frame.

\subsection{Offset construction}

Let $F(x_1,x_2,x_3)$ be the defining polynomial of an irreducible surface $\ss$ and $\delta\in \F$ a non-zero element. Then, we consider the incidence variety (where $\xx=(x_1,x_2,x_3)$ and $\yy=(y_1,y_2,y_3)$)
\[
 \bb=\left\{ (\yy,\xx,\lambda,W) \in \F^3 \times \F^3 \times \F \times \F \left/
   \begin{array}{l}
        F(\xx)=0   \\
        \sum_{i=1}^{3}(x_{i}-y_{i})^2=\delta^2 \\
        \yy=\xx+\lambda \nabla(F)(\xx) \\
        W \prod_{i=1}^{3} \frac{\partial F}{\partial x_i}(\xx) =1
   \end{array}
  \right. \right \}.
\]
The first equation ensures that $\xx \in \ss$, the second equation is the sphere center at $\xx$ with radius $\delta$, the third equation is the normal line to $\ss$ at the point $\xx$, and the forth equation ensures that the gradient vector is not the zero vector, i.e. that $\xx$ is not singular. We observe that all $\yy$, being part of a tuple in $\bb$, are indeed in the offset. Motivated by this fact, we consider the diagram
\[\xy
 (0,0)*+++{\bb \subset \F^3\times\F^3\times\F\times\F}="B";
 (-15,-20)*++{\pi_1(\bb)\subset\F^3}="k2";
 (15,-20)*++{\ss\subset\F^3}="k";
 {\ar_{\displaystyle \pi_1} "B"; "k2"};
 {\ar^{\displaystyle \pi_2} "B"; "k"};
 (50,-10)*{\mbox{(Offset Incidence Diagram)}};
\endxy\]
where $\pi_{1}$, $\pi_{2}$ are the natural projections
\[ \begin{array}{llllllllll}
 \pi_{1}\colon &  \F^3 \times \F^3 \times \F \times \F & \longrightarrow \F^3,& & & \pi_{2}: & \F^3 \times \F^3 \times \F \times \F  & \longrightarrow \F^3 \\
   & (\yy,\xx,\lambda,W) & \longmapsto \yy & & & & (\yy,\xx,\lambda,W) & \longmapsto \xx.
\end{array} \]
Then, the offset to $\ss$ at distance $d$ is defined as the Zariski closure of $\pi_{1}(\bb)$.

\subsection{Conchoidal construction}

Let $F(\xx)$ be the defining polynomial of an irreducible surface $\ss$, $\delta\in \F$ a non-zero element and $A\in \F^3$ (called the focus). Then, we consider the incidence variety
\[
 \bb=\left\{ (\yy,\xx,\lambda) \in \F^3 \times \F^3 \times \F  \left/
   \begin{array}{l}
         F(\xx)=0   \\
         \sum_{i=1}^{3}(x_{i}-y_{i})^2=\delta^2 \\
         \yy=A+\lambda (\xx-A)
   \end{array}
 \right.\right \}.
\]
The first equation ensures that $\xx \in \ss$, the second equation is the sphere center at $\xx$ with radius $d$, and the third equation is the line joining $\xx\in \ss$ with $A$. We consider the diagram
\[\xy
 (0,0)*+++{\bb \subset \F^3\times\F^3\times\F}="B";
 (-15,-20)*++{\pi_1(\bb)\subset\F^3}="k2";
 (15,-20)*++{\ss\subset\F^3}="k";
 {\ar_{\displaystyle \pi_1} "B"; "k2"};
 {\ar^{\displaystyle \pi_2} "B"; "k"};
 (55,-10)*{\mbox{(Conchoidal Incidence Diagram)}};
\endxy\]
where $\pi_{1}$, $\pi_{2}$ are the natural projections as above. Then, the conchoid to $\ss$ at distance $\delta$ from the focus $A$ is defined as the Zariski closure of $\pi_{1}(\bb)$.

\subsection{General geometric construction}

Now, in general, if $F(\xx)$ and $\ss$ are as above, we can define a \textsf{geometric construction} as an incidence diagram
\[\xy
 (0,0)*+++{\bb \subset \F^3\times\F^3\times\F^n\ (n\in {\mathbb N})}="B";
 (-15,-20)*++{\pi_1(\bb)\subset\F^3}="k2";
 (15,-20)*++{\ss\subset\F^3}="k";
 {\ar_{\displaystyle \pi_1} "B"; "k2"};
 {\ar^{\displaystyle \pi_2} "B"; "k"};
 (55,-10)*{\mbox{(General Incidence Diagram)}};
\endxy\]
where $\pi_{1}$, $\pi_{2}$ are the natural projections as above, and $\bb$ is the incidence variety
\[
 \bb=\left\{ (\yy,\xx,\zz) \in \F^3 \times \F^3 \times \F^n  \left/
 \begin{array}{l}
    F(\xx)=0   \\
    G_1(\xx,\yy,\zz)=0 \\ \,\,\,\,\vdots  \\
    G_m(\xx,\yy,\zz)=0
 \end{array}
  \right. \right \}.
\]
where $\zz=(z_1,\ldots,z_n)$ are the auxiliary variables and $G_1,\ldots,G_m$ are the polynomials defining the algebraic conditions of the geometric construction. In this situation, the Zariski closure of $\pi_1(\bb)$ is the \textsf{geometric variety} generated from $\ss$ via the geometric construction. In addition, we define the \textsf{degree of a geometric construction} as the degree of the map $\pi_2$; that is, as the cardinality of the anti-image via $\pi_2$ of a generic element in $\ss$.

We finish this section with the following theorem.

\begin{theorem}\label{theorem-geo-construction}
Let $\ss$ be a radical irreducible surface, and let $\mathcal{Z}$ be the geometric variety generated from $\ss$ via a geometric construction of degree at most 4. Then $\mathcal{Z}$ is radical.
\end{theorem}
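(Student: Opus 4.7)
The plan is to lift the given radical parametrization of $\ss$ through the incidence variety $\bb$ and then project the result to $\mathcal{Z}$ via $\pi_1$. Let $\cP(\ot)\in\E^3$ be a radical parametrization of $\ss$, where $\E$ is the top of a root tower over $\K=\F(\ot)$. Substituting $\xx=\cP(\ot)$ in the equations defining $\bb$, the relation $F(\xx)=0$ holds identically, and one is left with the system
\[
G_i(\cP(\ot),\yy,\zz)=0,\qquad i=1,\dots,m,
\]
in the $3+n$ unknowns $(\yy,\zz)$ with coefficients in $\E$. Scheme-theoretically this system is the fibre product $\bb\times_\ss\{\cP(\ot)\}$, so by the hypothesis $\deg(\pi_2)\leq 4$ the $\E$-algebra $A=\E[\yy,\zz]/(G_1,\ldots,G_m)$ it defines has dimension at most $4$ over $\E$ at a generic $\ot$.

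To extract the solutions by radicals, I would invoke the primitive element theorem. In characteristic zero, for a sufficiently generic choice of scalars $c_i,d_j\in\F$, the linear form $w=\sum_{i=1}^{3} c_i y_i+\sum_{j=1}^{n} d_j z_j$ is a primitive element of $A$ (or of each of its residue fields if $A$ is not itself a field): its minimal polynomial $h(T)\in\E[T]$ has degree at most $4$, and every $y_i$ and $z_j$ is a rational function of $w$ with coefficients in $\E$. Since polynomials of degree $\leq 4$ are solvable by radicals, $h$ has a root $w=R(\ot)$ in a field $\E'$ obtained by extending the root tower of $\E$ by a root tower for $h$; $\E'$ remains the top of a root tower over $\F(\ot)$. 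Substituting $w=R(\ot)$ into the rational expressions for the remaining coordinates produces $\yy(\ot)\in(\E')^3$ and $\zz(\ot)\in(\E')^n$.

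Set $\mathcal{Q}(\ot)=\yy(\ot)$. By construction the triple $(\mathcal{Q}(\ot),\cP(\ot),\zz(\ot))$ lies in $\bb$, hence $\mathcal{Q}(\ot)\in\overline{\pi_1(\bb)}=\mathcal{Z}$, and its coordinates sit in the top of a root tower over $\F(\ot)$, meeting the first condition of Definition \ref{def-rad-param}. For the Jacobian condition: $\bb$ is two-dimensional since $\pi_2$ is finite of degree $\leq 4$ over the surface $\ss$; the lift $\ot\mapsto(\mathcal{Q}(\ot),\cP(\ot),\zz(\ot))$ is dominant onto $\bb$ because its composition with $\pi_2$ is the dominant map $\cP$; and $\pi_1\colon\bb\dashrightarrow\mathcal{Z}$ is dominant by definition of $\mathcal{Z}$. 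The composition $\ot\mapsto\mathcal{Q}(\ot)$ is therefore a dominant map to the surface $\mathcal{Z}$, so its Jacobian has rank $2$ generically.

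I expect the main obstacle to be the consistent combination of radicals for the various coordinates of $(\yy,\zz)$: eliminating one coordinate at a time would yield several polynomials of degree $\leq 4$ whose roots would have to be paired correctly, producing up to $4^{3+n}$ spurious combinations. The primitive element step bypasses this by concentrating all ambiguity in a single degree-$\leq 4$ extension of $\E$, after which every coordinate is recovered rationally, in the same spirit as the root-combination step of the curve algorithms from \cite{SeSevilla} used in Section \ref{sec-pencil}.
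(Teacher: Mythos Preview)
Your argument is correct and reaches the same conclusion as the paper, but the mechanism for extracting the radical solutions of the zero-dimensional system $\{G_i(\cP(\ot),\yy,\zz)=0\}$ is genuinely different. The paper computes a reduced lex Gr\"obner basis of this ideal over $\E$, obtaining a triangular system $\{g_{1,1}(y_1),\,g_{2,1}(y_1,y_2),\ldots\}$; it then solves $g_{1,1}$ (degree $\leq 4$) by radicals, substitutes each root into the next block, and iterates, using that every partial solution extends and that at each stage at most four values arise. You instead invoke the primitive element theorem (equivalently, a shape-lemma presentation): a generic linear form $w$ in the $y_i,z_j$ has a minimal polynomial $h$ of degree $\leq 4$ over $\E$, and all coordinates are rational in $w$, so a single radical extension suffices. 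Your route packages the whole extension in one step and cleanly sidesteps the root-pairing issue; the paper's Gr\"obner-basis route is closer to an actual algorithm one would run. You also make the Jacobian/rank-$2$ verification explicit via a dominance argument, something the paper's proof omits; note, however, that this step tacitly assumes the relevant component of $\mathcal{Z}$ is two-dimensional, an assumption the paper leaves implicit as well.
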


\begin{proof}
Let the incidence variety $\bb$ of the geometric construction be defined by $\{F(\xx),G_1(\xx,\yy,\zz),\ldots,G_m(\xx,\yy,\zz)\}$. Let $\cP(\ot)$ be a radical parametrization of $\ss$, and let $\E$ be the last field in the tower root defining $\cP(\ot)$. Since the degree of the construction is at most $4$ we know that the algebraic system of equations $\{G_i(\cP(\ot),\yy,\zz)=0\}_{i=1,\ldots,m}$ has at most 4 solutions over the algebraic closure of $\E$. Let $\mathcal G$ be the reduced Gr\"obner basis of $\{G_i(\cP(\ot),\yy,\zz)\}_{i=1,\ldots,m}$ with respect to a lex order with $z_n>\cdots>z_1>y_n \cdots > y_1$; note that the ideal is considered in $\E[\yy,\zz]$.

Since the ideal is zero-dimensional, $\mathcal{G}$ is of the form
\[
 \{ g_{1,1}(\ot,y_1), g_{2,1}(\ot,y_1,y_2),\ldots,g_{2,k_2}(\ot,y_1,y_2),\ldots \}.
\]
Furthermore the degree of $g_{1,1}$ is at most 4, and every solution of $g_{1,1}$ over the algebraic closure of $\E$ can be continued to a solution of the full system (see e.g. \cite{win} p. 194). Therefore, solving by $g_{1,1}$, we express $y_1$ in terms of radicals. Now, for each root $\alpha$ of $g_{1,1}$, $\{g_{2,1}(\ot,\alpha,y_2)=0,\ldots,g_{2,k_2}(\ot,\alpha,y_2)=0\}$ has at most 4 roots. Therefore, $y_2$ can be expressed by radicals. So, by induction we get all $\yy$ expressed by radicals. Therefore, $\mathcal Z$ is parametrizable by radicals.
\end{proof}

Now, taking into account that the offset and conchoidal constructions are of degree 2, we have the following corollaries.

\begin{corollary}
The offset of a radical irreducible surface is radical.
\end{corollary}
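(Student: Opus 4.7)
The plan is to apply Theorem \ref{theorem-geo-construction} directly, so the only real content is to verify that the offset construction is a geometric construction in the sense of the general incidence diagram and that its degree (the cardinality of a generic $\pi_2$-fiber) is at most 4. The incidence variety $\bb$ for offsets is already presented in the offset subsection, with auxiliary variables $\zz=(\lambda,W)$ and defining polynomials
\[
 G_1=\sum_{i=1}^{3}(x_i-y_i)^2-\delta^2,\quad G_{1+i}=y_i-x_i-\lambda\,\tfrac{\partial F}{\partial x_i}(\xx)\ (i=1,2,3),\quad G_5=W\prod_{i}\tfrac{\partial F}{\partial x_i}(\xx)-1,
\]
so it fits the general template. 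The offset itself is by definition the Zariski closure of $\pi_1(\bb)$, which matches the notion of geometric variety used in Theorem \ref{theorem-geo-construction}.

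Next I would compute the degree of $\pi_2$. Fix a generic $\xx\in\ss$; genericity ensures $\nabla F(\xx)\neq 0$, so $W$ is uniquely determined by the last equation. The normal-line equations then force $\yy=\xx+\lambda\,\nabla F(\xx)$, reducing everything to the single scalar equation $\lambda^2\,\|\nabla F(\xx)\|^2=\delta^2$, which has exactly two solutions $\lambda=\pm\delta/\|\nabla F(\xx)\|$. Hence a generic fiber of $\pi_2$ has cardinality $2$, so the offset construction has degree $2\le 4$.

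With these two observations in hand, the corollary follows immediately: $\ss$ is radical by hypothesis, the offset construction is a geometric construction of degree at most $4$, and Theorem \ref{theorem-geo-construction} then yields that the offset is radical. The only point that requires care is the genericity needed to make the degree computation rigorous (avoiding singular points of $\ss$ and points where $\nabla F$ is isotropic so that $\|\nabla F(\xx)\|$ vanishes); since $\ss$ is irreducible these loci are proper subvarieties, so they do not affect the generic fiber count and hence do not affect the degree of $\pi_2$.
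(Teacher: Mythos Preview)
Your proposal is correct and follows exactly the paper's approach: the paper simply asserts that the offset construction has degree $2$ and invokes Theorem \ref{theorem-geo-construction}, and you have supplied the (short) verification of that degree by analyzing the generic $\pi_2$-fiber. Your extra care about isotropic gradients is a welcome addition but does not change the argument.
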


\begin{corollary}
The conchoid of a radical irreducible surface is radical.
\end{corollary}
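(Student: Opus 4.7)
The plan is to reduce the statement to a direct application of Theorem \ref{theorem-geo-construction}. Since that theorem guarantees that any geometric construction of degree at most $4$ applied to a radical surface produces a radical variety, it suffices to check that the conchoidal construction, as described in the Conchoidal Incidence Diagram, has degree $\leq 4$.

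I would first recall the incidence variety $\bb$ for the conchoid: it is cut out by $F(\xx)=0$, the sphere equation $\sum_{i=1}^3 (x_i-y_i)^2=\delta^2$, and the line condition $\yy = A+\lambda(\xx-A)$. To compute the degree of the construction, I need the cardinality of $\pi_2^{-1}(\xx)$ for a generic $\xx \in \ss$. Fixing such an $\xx$, the third equation parametrizes a line through $A$ by the scalar $\lambda$, and substituting $\yy=A+\lambda(\xx-A)$ into the sphere equation yields a single quadratic equation in $\lambda$ alone:
\[
 \sum_{i=1}^{3}\bigl(x_i-a_i-\lambda(x_i-a_i)\bigr)^{2}=\delta^{2},
\]
that is, $(1-\lambda)^2\,\|\xx-A\|^2=\delta^2$. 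For a generic $\xx \in \ss$ (in particular $\xx\neq A$) this has exactly two solutions for $\lambda$, each determining a unique $\yy$. Hence $\#\pi_2^{-1}(\xx)=2$ generically, so the conchoidal construction has degree $2$.

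Since $2\leq 4$, Theorem \ref{theorem-geo-construction} applies directly, and the Zariski closure of $\pi_1(\bb)$, which by definition is the conchoid of $\ss$ at distance $\delta$ from the focus $A$, is radical. The only subtle point one has to verify is that the surface $\ss$ being irreducible passes to irreducibility (or at least sufficient control) of the conchoid component so that the hypothesis of Theorem \ref{theorem-geo-construction} is applied to the correct variety; but the theorem is stated for the geometric variety generated from $\ss$, so no further argument is required. I expect no real obstacle here: the entire content is the degree computation above, which is immediate from the one-variable quadratic in $\lambda$.
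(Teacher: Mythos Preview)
Your proposal is correct and follows exactly the paper's approach: the paper simply notes that the conchoidal construction has degree $2$ and invokes Theorem \ref{theorem-geo-construction}. Your explicit computation of the fiber $\pi_2^{-1}(\xx)$ via the quadratic $(1-\lambda)^2\|\xx-A\|^2=\delta^2$ just spells out the degree-$2$ claim that the paper asserts without detail.
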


\section*{Acknowledgements}

The authors thank Josef Schicho and Michael Harrison for much of the background information provided on the cases of genus 1 and 4.

\end{document}